\newtheorem{thm}{Theorem}[section]
\newtheorem{lem}[thm]{Lemma}
\newcommand{\prob}{\operatorname{Prob}}
\newcommand{\Z}{\mathbb{Z}}
\newcommand{\pat}{\textnormal{Pat}}
\newcommand{\nonc}{\textrm{NC}}
\newcommand{\leftplaq}{\ell}
\newcommand{\rightplaq}{r}
\newcommand{\rows}{\mathcal{R}_L}
\newcommand{\leftsymb}{\mathord{\leftarrow}}
\newcommand{\rightsymb}{\mathord{\rightarrow}}
\newcommand{\centersymb}{\mathord{*}}
\newcommand{\spac}{\ \,}
\newcommand{\pplaquette}[1]{\!\!\raisebox{-2.5pt}{
\setlength{\unitlength}{0.25pt}
\begin{picture}(50,50)(0,0)
\put(0,0){\framebox(50,50){#1}}
\end{picture}
}}
\newcommand{\auxplaquette}[1]{\!\!\!
\raisebox{-6pt}{
\setlength{\unitlength}{0.2pt}
\begin{picture}(50,100)(0,0)
\put(0,0){\framebox(50,100){#1}}
\end{picture}
}}
\title{Bijective combinatorial proof of the commutation of transfer matrices in the dense $O(1)$ loop model}
\author{Ron Peled \and Dan Romik}
\begin{document}

\maketitle

\begin{abstract}
The dense $O(1)$ loop model is a statistical physics model with connections to the quantum XXZ spin chain, alternating sign matrices, the six-vertex model and critical bond percolation on the square lattice. When cylindrical boundary conditions are imposed, the model possesses a commuting family of transfer matrices. The original proof of the commutation property is algebraic and is based on the Yang-Baxter equation. In this paper we give a new proof of this fact using a direct combinatorial bijection.
\end{abstract}

\renewcommand{\thefootnote}{\fnsymbol{footnote}}
\footnotetext{\emph{Key words:} dense $O(1)$ loop model, noncrossing matching, connectivity pattern, Yang-Baxter equation, transfer matrix, bijective proof.}
\footnotetext{\emph{2010 Mathematics Subject Classification:} 60K35, 82B20, 82B23.}
\renewcommand{\thefootnote}{\arabic{footnote}}

\section{Introduction}

In the \textbf{dense $O(1)$ loop model}, a square lattice is tiled with the following two kinds of square tiles known as \textbf{plaquettes}, denoted symbolically by $\leftplaq$ and $\rightplaq$:
\begin{center}
\begin{tabular}{ccc}
\raisebox{22pt}{$\leftplaq:=$}
\scalebox{0.17}{\includegraphics{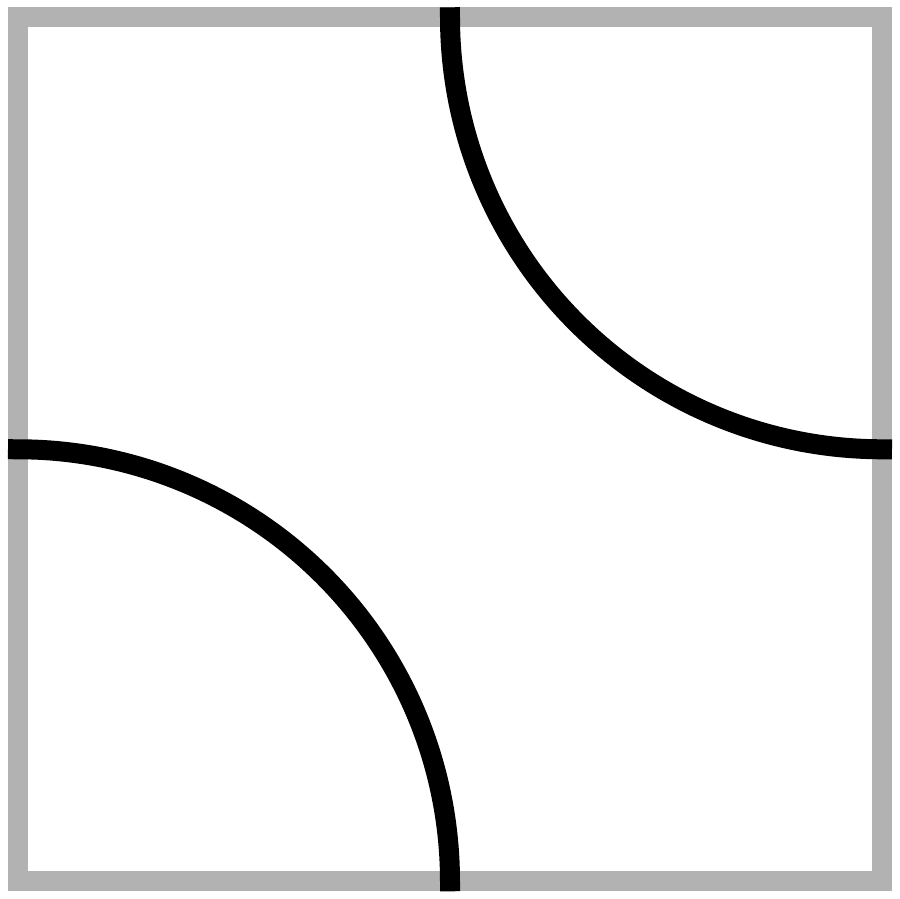}} & &
\raisebox{22pt}{$\rightplaq:=$}
\scalebox{0.17}{\includegraphics{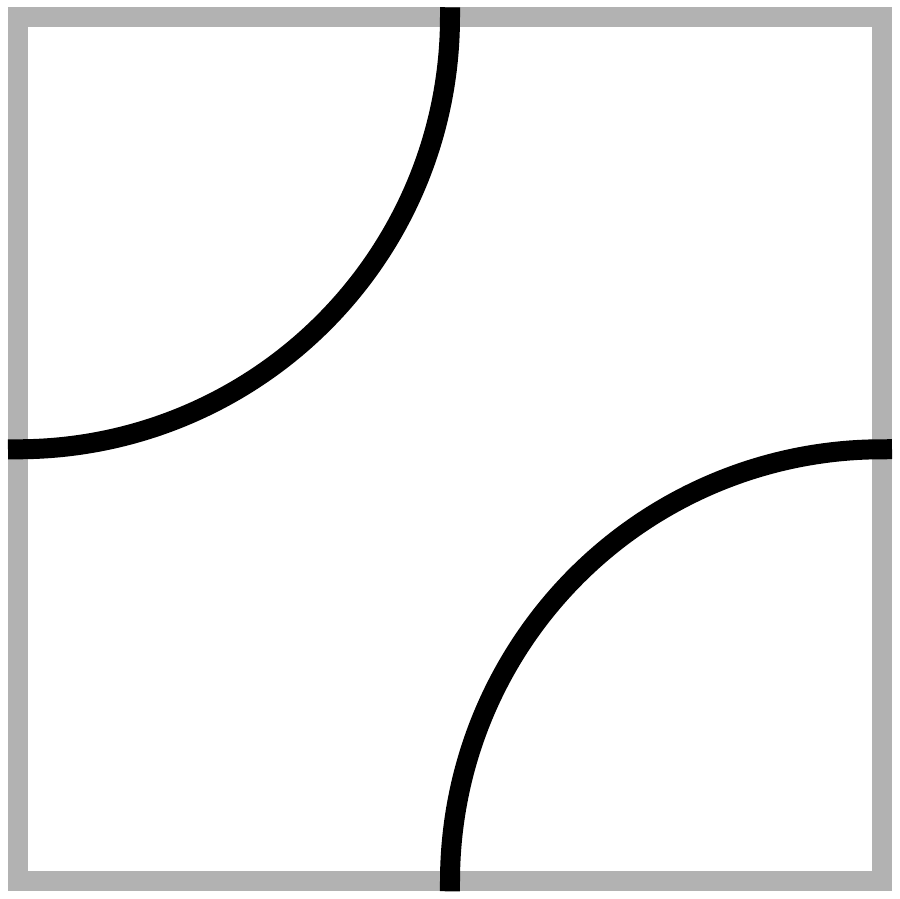}}
\end{tabular}
\end{center}

An arrangement of plaquettes induces a \textbf{connectivity pattern} on points, referred to as \textbf{endpoints}, that lie mid-edge on boundary edges of the configuration. We will focus on a particular case studied in several recent papers \cite{zinn-justin-di-francesco1, fonseca-zinn-justin, romik, zinn-justin-di-francesco2}, which concerns a semi-infinite cylindrical geometry obtained by considering a plaquette tiling on the strip $[0,L]\times[0,\infty)$ with periodic boundary conditions along the \hbox{$x$-direction}, where $L:=2n$ is an even integer. In this case, the connectivity pattern is a \textbf{noncrossing matching} (also sometimes called a \textbf{link pattern} \cite{zinn-justin}) of the $2n$ boundary endpoints; see Figure~\ref{fig:noncrossing}.

\begin{figure}[h]
\begin{center}
\begin{tabular}{ccc}
\scalebox{0.8}{\includegraphics{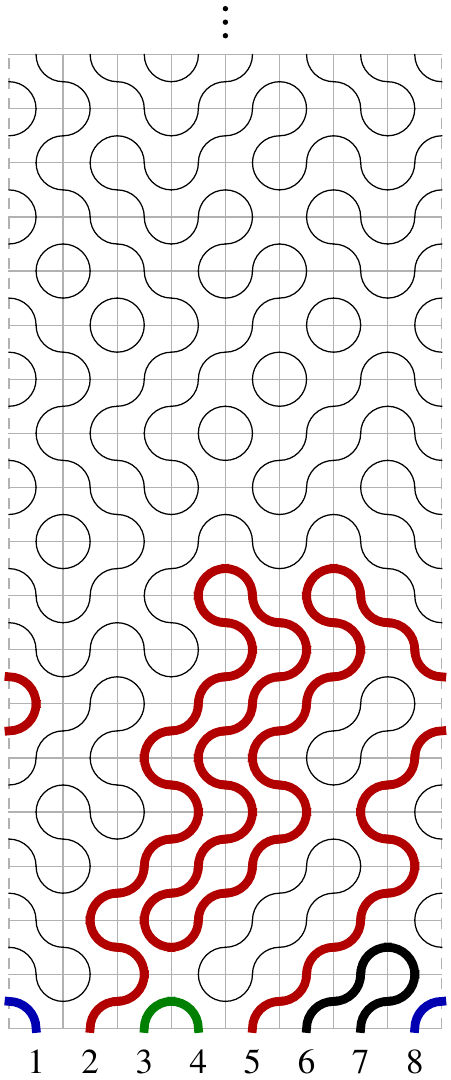}} & \quad
\raisebox{100pt}{$\longrightarrow$} & \raisebox{113pt}{
\begin{tabular}{cc}
\scalebox{0.58}{\includegraphics{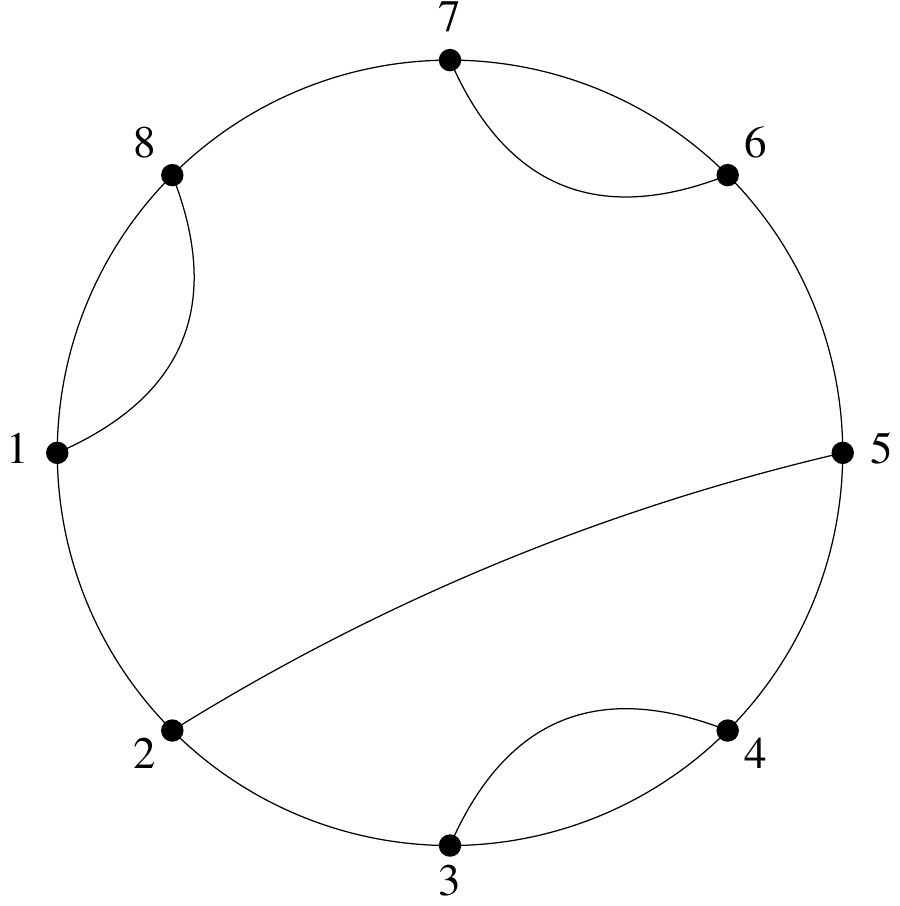}}
\\[10pt]
\scalebox{0.65}{\includegraphics{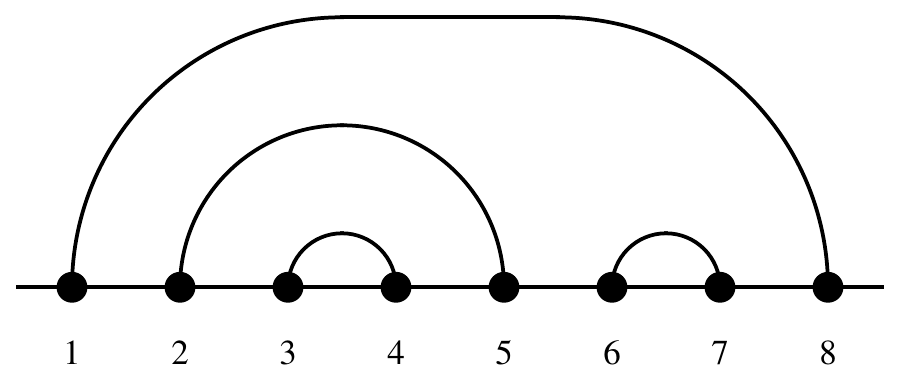}}
\end{tabular}
}
\\[10pt]
(a) & & (b)
\end{tabular}
\caption{
(a) An arrangement of plaquettes on a semi-infinite vertical strip. The left and right boundary edges are identified to form a topological cylinder, and paths connecting pairs of endpoints are highlighted. (b) The induced noncrossing matching of $2n=8$ endpoints, shown in two equivalent representations as a matching of points around a circle or on a line.
}
\label{fig:noncrossing}
\end{center}
\end{figure}

Denote the set of noncrossing matchings of $2n$ points by $\nonc_n$. Equipping the set of plaquette tilings with a probability measure induces a probability distribution on $\nonc_n$. In the simplest case in which plaquettes are chosen independently and with equal probabilities for both plaquette types, this model was referred to in \cite{romik} as \textbf{loop percolation}, and is essentially a way to encode information about connectivities in critical bond percolation on $\Z^2$. The distribution of the induced noncrossing matching in this case has remarkable properties and is closely related to the \textbf{XXZ~spin~chain} from quantum statistical mechanics and to the \textbf{fully packed loops} model associated with alternating sign matrices and the six-vertex model; see \cite{romik} for recent results and many references.

A natural generalization of the $\textrm{Bernoulli}(1/2)$ measure on
plaquettes discussed above is a so-called inhomogeneous version in
which plaquettes are sampled independently but with respective
probabilities $p,1-p$ for the two types of plaquettes $\leftplaq=\
$\raisebox{-3pt}{\scalebox{0.05}{\includegraphics{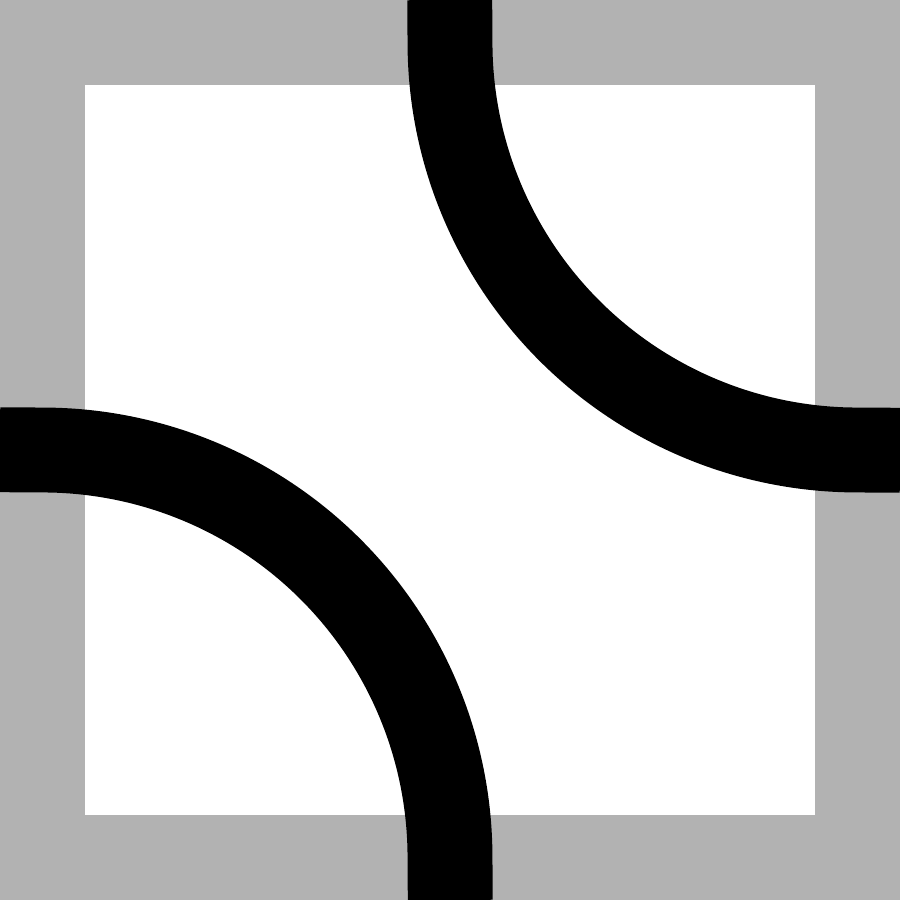}}}
and $\rightplaq=\
$\raisebox{-3pt}{\scalebox{0.05}{\includegraphics{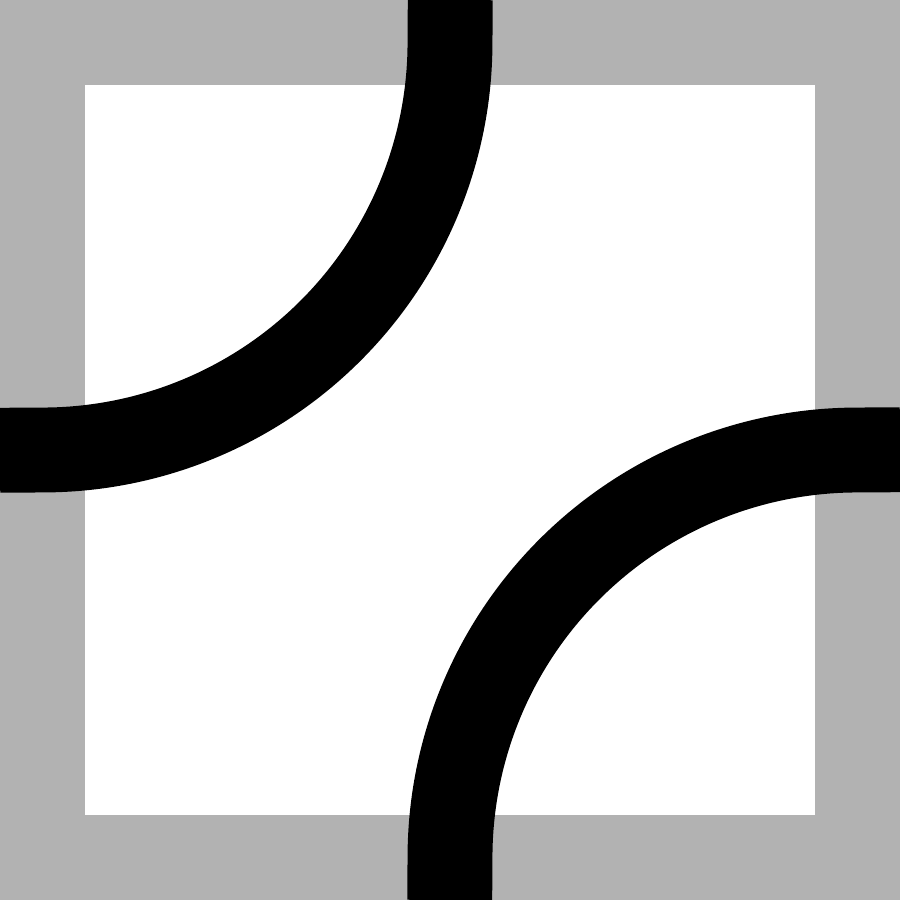}}}
(call a plaquette sampled randomly in this way a \textbf{$p$-biased
plaquette}). The value of $p$ can even depend on the position of the
plaquette. In particular, two variants of such an inhomogeneous
model play an important role in the theory: in one variant, we
assign different biases $p_1,\ldots,p_L$ to plaquettes in different
columns. The probability distribution of the induced connectivity
pattern in this inhomogeneous model involves functions of the
parameters $p_1,\ldots,p_L$ which have a natural algebraic
description as multivariate polynomials, known as \textbf{wheel
polynomials}, which are solutions to the \textbf{quantum
Knizhnik-Zamolodchikov (qKZ) equation}
\cite{zinn-justin-di-francesco2}.

The second type of inhomogeneous model involves the assignment of different biases $p_1,p_2,\ldots$ to plaquettes in different \emph{rows}, that is, the plaquettes in the $j$th row (counting from the bottom) are sampled with bias $p_j$. In this case, we have the following surprising invariance result.

\begin{thm}[Invariance of the connectivity pattern distribution] \label{thm:invariance}
Let $(p_j)_{j=1}^\infty$ be a sequence of numbers in $[0,1]$ such that $\sum_{j=1}^\infty p_j^n(1-p_j)^n=\infty$. When the plaquettes are selected independently at random with bias $p_j$ for plaquettes in the $j$th row as described above, the resulting connectivity pattern is almost surely well-defined, and its distribution is independent of the biases~$p_j$.
\end{thm}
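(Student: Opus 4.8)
The plan is to recast the problem using the row transfer matrices whose commutation is established in this paper, and to combine that algebraic input with a probabilistic contraction estimate. Let $V=\R^{\nonc_n}$ be the vector space spanned by the noncrossing matchings, and for $p\in[0,1]$ let $R(p)\colon V\to V$ be the stochastic matrix whose entry $R(p)_{m',m}$ is the probability that adding one row of independent $p$-biased plaquettes turns the connectivity pattern $m$ into $m'$. Filling rows $1,\dots,N$ and reading the pattern of the bottom endpoints (with an arbitrary boundary pattern $\mu_{\mathrm{top}}$ imposed at height $N$) gives the distribution $\mu_N=R(p_1)R(p_2)\cdots R(p_N)\,\mu_{\mathrm{top}}$, and the connectivity pattern distribution of the semi-infinite cylinder is $\lim_{N\to\infty}\mu_N$. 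The theorem then splits into two claims: the limit exists and the pattern is almost surely well-defined; and the limit depends neither on $(p_j)$ nor on $\mu_{\mathrm{top}}$.

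The analytic engine is a reset event. Inspecting the two plaquette types $\leftplaq,\rightplaq$, one checks that a single row whose plaquettes alternate between the two types seals the strip: it caps the $2n$ strands entering from below into the fixed matching $(1,2)(3,4)\cdots(2n-1,2n)$ and caps the strands leaving above among themselves, so the tiling below such a row is completely decoupled from everything above it. This configuration uses exactly $n$ plaquettes of each type, hence occurs in row $j$ with probability at least $p_j^{\,n}(1-p_j)^{\,n}$. Therefore the reset matching is reachable from \emph{every} pattern under $R(p_j)$ with probability at least $p_j^{\,n}(1-p_j)^{\,n}$, which bounds the Dobrushin ergodicity coefficient by $\delta\bigl(R(p_j)\bigr)\le 1-p_j^{\,n}(1-p_j)^{\,n}$. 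In particular $\delta\bigl(R(p)\bigr)<1$ for $p\in(0,1)$, so each such $R(p)$ is a contraction in total variation and has a unique stationary distribution $\psi_p$.

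Now I would invoke the commutation $R(p)R(q)=R(q)R(p)$ proved in this paper to collapse these stationary distributions to a single one. Since $R(p)\bigl(R(q)\psi_p\bigr)=R(q)\bigl(R(p)\psi_p\bigr)=R(q)\psi_p$, the vector $R(q)\psi_p$ is stationary for $R(p)$; uniqueness forces $R(q)\psi_p=\psi_p$, so $\psi_p$ is stationary for every $R(q)$ and therefore $\psi_p=\psi_q=:\psi$ for all $p,q\in(0,1)$. This $\psi$ is the only candidate for a bias-independent limit. Convergence to it is immediate from submultiplicativity of $\delta$ together with $R(p)\psi=\psi$:
\[
\bigl\|\mu_N-\psi\bigr\|_{\mathrm{TV}}\le\Bigl(\prod_{j=1}^N\delta\bigl(R(p_j)\bigr)\Bigr)\bigl\|\mu_{\mathrm{top}}-\psi\bigr\|_{\mathrm{TV}}\le\prod_{j=1}^N\bigl(1-p_j^{\,n}(1-p_j)^{\,n}\bigr),
\]
and the right-hand side tends to $0$ exactly when $\sum_j p_j^{\,n}(1-p_j)^{\,n}=\infty$. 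This gives convergence to the bias-independent limit $\psi$; applying the second Borel--Cantelli lemma to the independent reset events, whose probabilities sum to $\infty$, shows that infinitely many reset rows occur almost surely, and the lowest one freezes the bottom connectivity pattern after finitely many rows, establishing the almost sure well-definedness.

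The step I expect to demand the most care is the reduction of the true semi-infinite model to its finite truncations: one must verify that reading the connectivity pattern off the random tiling of $[0,L]\times[0,\infty)$ genuinely coincides with $\lim_N\mu_N$ --- including measurability and the irrelevance of the boundary convention $\mu_{\mathrm{top}}$ --- and one must check explicitly that the alternating row is a bona fide reset for the particular plaquettes $\leftplaq,\rightplaq$ used here. The remaining ingredients are the commutation identity, which is the combinatorial heart of the paper, and the standard Dobrushin contraction and Perron--Frobenius machinery.
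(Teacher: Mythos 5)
Your proposal is correct, and it reaches the invariance theorem by a genuinely different route from the paper's own proof sketch. The two arguments share two ingredients: the alternating row $(\rightplaq\ \leftplaq\ \cdots\ \rightplaq\ \leftplaq)$, which seals the cylinder and via Borel--Cantelli gives almost sure well-definedness, and the commutation theorem. But the paper uses the alternating row \emph{only} for well-definedness; it then derives invariance by swapping adjacent biases, $D(\ldots,p_j,p_{j+1},\ldots)=D(\ldots,p_{j+1},p_j,\ldots)$, and ``bringing a reference row in from infinity'' with a fixed bias $r_0=1/2$, via two limiting arguments (weak dependence of the pattern on far-away rows, plus induction and a second limit) that are explicitly left to the reader. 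You instead exploit the alternating row a second time, as a Doeblin minorization: every column of $R(p)$ puts mass at least $p^n(1-p)^n$ on the reset matching, so $\delta(R(p_j))\le 1-p_j^{\,n}(1-p_j)^{\,n}$, each $R(p)$ with $p\in(0,1)$ has a unique stationary vector $\psi_p$, and commutation collapses all of these to a single $\psi$; the product bound then gives convergence. Two points are worth making explicit in a write-up. First, the telescoping estimate needs $R(p_j)\psi=\psi$ for \emph{every} $j$, including degenerate biases $p_j\in\{0,1\}$; your fixed-point argument does deliver this, because Theorem~\ref{thm:matrices-commute} holds on all of $[0,1]$, so $R(q)\psi_p$ is $R(p)$-stationary and hence equals $\psi_p$ even when $q\in\{0,1\}$. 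Second, the reduction you flag --- identifying the true cylinder pattern with $\lim_N\mu_N$ independently of the boundary convention --- is closed by your own reset-row analysis: if $J$ denotes the first reset row, then on $\{J\le N\}$ the pattern read from rows $1,\ldots,N$ does not depend on what is placed above row $N$, and $\prob(J>N)\le\prod_{j\le N}\bigl(1-p_j^{\,n}(1-p_j)^{\,n}\bigr)\to 0$. As for what each approach buys: the paper's route is shorter modulo the soft limiting steps it omits, while yours replaces those steps with a quantitative total-variation rate and confines the infinite-volume reasoning to a single explicit coupling, making the argument self-contained.
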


Theorem~\ref{thm:invariance} is an easy consequence (see Section~\ref{sec:final-remarks}) of a more fundamental algebraic fact regarding the commutation of a family of row transfer matrices associated with the model, which encode the effect on the connectivity pattern of an added row of $p$-biased plaquettes.
To make this precise, we first define a way for a row $\rho \in \{\leftplaq, \rightplaq\}^L$ of plaquettes to act on a noncrossing matching $\pi\in \nonc_n$ and produce a new matching $\rho(\pi)$: this is done by graphically ``composing'' $\rho$ and $\pi$, that is, drawing the row $\rho$ below the diagram associated with $\pi$ and ``pulling the strings''; see Figure~\ref{fig:graphical-composition} for an example. Next, for each $p\in[0,1]$, we define a matrix $T_L^{(p)}= (t_{\pi,\pi'}^{(p)})_{\pi,\pi'\in \nonc_n}$ whose rows and columns are indexed by noncrossing matchings of order~$n$. The entries of $T_L^{(p)}$ are defined as the transition probabilities
$$t_{\pi,\pi'}^{(p)} := \prob( \pi'=\rho_p(\pi) ), $$
where $\rho_p$ denotes a random row of independently sampled $p$-biased plaquettes.

\begin{figure}
\begin{center}
\begin{tabular}{ccc}
\scalebox{0.6}{\includegraphics{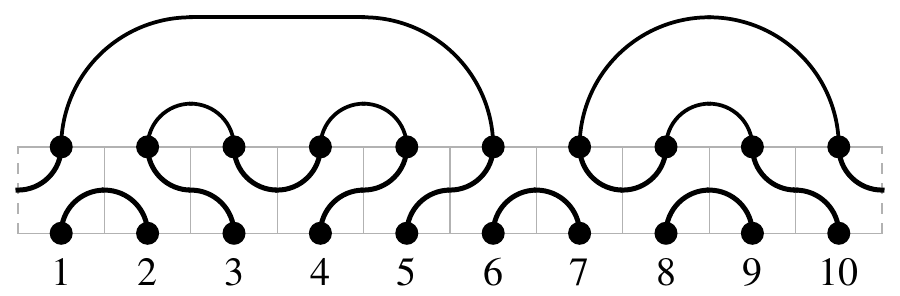}} &
\raisebox{16pt}{$\longrightarrow$} &
\scalebox{0.6}{\includegraphics{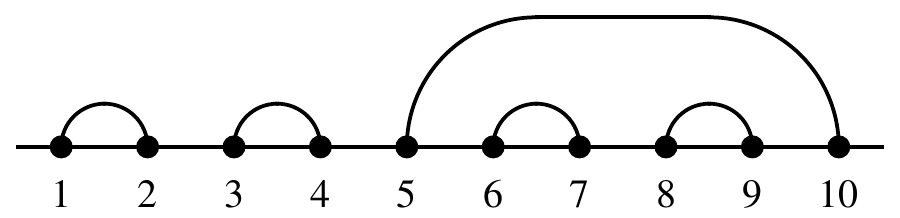}}
\end{tabular}
\caption{Composing a noncrossing matching $\pi$ with a row $\rho$ of plaquettes produces a new matching $\pi'=\rho(\pi)$.}
\label{fig:graphical-composition}
\end{center}

\end{figure}

\begin{thm}[Commutation of transfer matrices] \label{thm:matrices-commute}
The matrices \linebreak
$(T_L^{(p)})_{0\le p\le 1}$ form a commuting family of matrices. That is, for all $p,q\in [0,1]$ we have
$$ T_L^{(p)} T_L^{(q)} = T_L^{(q)} T_L^{(p)}. $$
\end{thm}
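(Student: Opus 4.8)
The plan is to pass from the algebraic identity to a statement about the distribution of the \demph{tangle} realized by two stacked rows of plaquettes, and then to prove that statement by an explicit weight-preserving bijection.

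First I would rewrite the matrix entries probabilistically. Matrix multiplication together with the definition of the entries gives, for $\pi,\pi'\in\nonc_n$,
$$(T_L^{(p)}T_L^{(q)})_{\pi,\pi'}=\sum_{\sigma\in\nonc_n}\prob\big(\rho_p(\pi)=\sigma\big)\,\prob\big(\rho_q(\sigma)=\pi'\big)=\prob\big(\rho_q(\rho_p(\pi))=\pi'\big),$$
where $\rho_p$ and $\rho_q$ are independent. Geometrically this is the two-row strip whose \emph{top} row is a $p$-biased row and whose \emph{bottom} row is an independent $q$-biased row; call the induced law on plaquette configurations the $pq$-ensemble, and let the $qp$-ensemble (top $q$, bottom $p$) correspond to $T_L^{(q)}T_L^{(p)}$. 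The effect of a two-row strip on any input matching $\pi$ depends only on the planar diagram $\tau$ — the \demph{tangle} — that the strip draws between its $L$ top endpoints and $L$ bottom endpoints (contractible and noncontractible loops receive weight $1$ and may be erased). Writing $\Theta(\tau)$ for the resulting action on matchings, both entries above take the form $\sum_{\tau:\,\Theta(\tau)(\pi)=\pi'}\mu(\tau)$, with $\mu=\mu_{pq}$ and $\mu=\mu_{qp}$ the respective laws of $\tau$. Hence it suffices to prove the single identity $\mu_{pq}(\tau)=\mu_{qp}(\tau)$ for every tangle $\tau$: the two-row tangle has the same distribution regardless of the order in which the two biases are applied.

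Second, I would reduce this to combinatorics by recording the weights explicitly. A two-row configuration is a pair $(u,v)$ with $u,v\in\{\leftplaq,\rightplaq\}^L$ (top and bottom rows). In the $pq$-ensemble its weight is $p^{a}(1-p)^{L-a}q^{b}(1-q)^{L-b}$, where $a,b$ are the numbers of $\leftplaq$-plaquettes in $u$ and in $v$; in the $qp$-ensemble the roles of $p$ and $q$ are switched. Since $\mu_\bullet(\tau)$ is the sum of these weights over all $(u,v)$ realizing $\tau$, the identity $\mu_{pq}(\tau)=\mu_{qp}(\tau)$ (simultaneously for all $p,q$) follows if I exhibit a bijection $\Phi:(u,v)\mapsto(u',v')$ on the set of two-row configurations realizing a \emph{fixed} tangle $\tau$ such that $u'$ has as many $\leftplaq$-plaquettes as $v$ and $v'$ has as many $\leftplaq$-plaquettes as $u$; matching powers of $p$ and of $q$ then shows the $pq$-weight of $(u,v)$ equals the $qp$-weight of $\Phi(u,v)$. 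A useful orientation for the weight bookkeeping is that a $180^\circ$ rotation of the strip already performs the required exchange of row statistics — each plaquette is centrally symmetric, so rotation preserves its type while swapping the two rows — but it rotates the tangle; the task is therefore to realize the same exchange of statistics while keeping $\tau$ fixed.

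The heart of the argument, and the step I expect to be the main obstacle, is the construction of such a tangle-preserving bijection $\Phi$. The difficulty is that $\tau$ is a \emph{global} function of all $2L$ plaquettes: naively swapping the two plaquettes in a single column alters the local connections at the shared mid-edge and changes $\tau$, so no column-by-column swap can work directly. I would instead build $\Phi$ from a local \emph{interchange move} acting on a small window of the strip that trades plaquette types between the top and bottom rows and is checked by hand to leave the composite tangle unchanged — the combinatorial substitute for the Yang--Baxter equation. Following the strands through the strip, I would apply such moves in a prescribed order (for instance sweeping left to right, pushing the ``$p$-character'' of the plaquettes downward past the ``$q$-character'') so as to bring every configuration to a canonical form, and take $\Phi$ to be the resulting rearrangement. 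Two points constitute the real work: verifying that each elementary move preserves $\tau$ (an isotopy/Temperley--Lieb computation on the affected strands, including the erasure of any loops created), and verifying \emph{confluence}, i.e. that the moves compose into a single well-defined bijection independent of the order of application, so that $\Phi$ is genuinely invertible and the $\leftplaq$-count exchange holds. Once $\Phi$ is established, the theorem follows for all $p,q\in[0,1]$ at once, since $\Phi$ does not depend on $p$ or $q$.
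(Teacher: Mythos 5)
Your reduction of the theorem is exactly the one in the paper: you rewrite $(T_L^{(p)}T_L^{(q)})_{\pi,\pi'}$ as the probability that the two-row configuration $\binom{\rho_p}{\rho_q}$ maps $\pi$ to $\pi'$, note that this is a sum of monomial weights over configurations realizing each fixed connectivity pattern of the $2L$ boundary endpoints, and conclude that it suffices to exhibit a pattern-preserving bijection on two-row configurations that exchanges the $\leftplaq$-counts of the two rows. That is precisely the content of Theorem~\ref{thm:involution}, and your deduction of commutation from it coincides with Section~\ref{sec:proof-commutation}. The problem is that you stop exactly where the paper's work begins: you do not construct the bijection $\Phi$, you only outline a strategy (unspecified local interchange moves, a left-to-right sweep to a canonical form, a confluence argument) and you yourself defer its two key verifications as ``the real work.'' Since the existence of such a bijection is the entire mathematical content of the result --- everything else is standard Markov-matrix bookkeeping --- this is a genuine gap, not a detail left to the reader.

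Moreover, the strategy you sketch is unlikely to go through as stated. The moves the paper actually uses are not bounded-window moves: they are $180^\circ$ rotations of \emph{fundamental blocks}, i.e., blocks consisting of $j$ columns $\binom{\rightplaq}{\leftplaq}$, one arbitrary column, and $k$ columns $\binom{\leftplaq}{\rightplaq}$, whose length is unbounded (a block can span the whole cycle); and they yield a well-defined map only when applied to the \emph{maximal} such blocks, whose pairwise disjointness (Lemma~\ref{lem:blocks-disjoint}) is what replaces any confluence argument --- maximality, not a choice of sweep order, is what makes $V$ well defined, involutive, and count-exchanging. If instead you insist on bounded local moves compatible with a sweep, the natural candidate is the Yang-Baxter move with an auxiliary plaquette (Section~\ref{sec:yang-baxter}); but that move operates at the level of (signed) measures, requires the extra parameter $s(p,q)$, and, as the paper stresses, yields only summation identities over classes of configurations with no evident pointwise, $(p,q)$-independent refinement into a bijection. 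So the missing step in your proposal is exactly the paper's Section~\ref{sec:construction}: identifying the fundamental blocks, proving that rotating such a block preserves the connectivity pattern (Lemma~\ref{lem:block-pattern-inv}), and proving the disjointness and uniqueness statements that make the resulting map a pattern-preserving, count-exchanging involution.
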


Theorem~\ref{thm:matrices-commute} seems to be well-known to experts in the field, but we are unsure of its precise origin. It is mentioned in \cite{mitra-etal} and in Section~3.3.2 of \cite{zinn-justin}, where it is said to follow from the Yang-Baxter equation. We review the idea behind this elegant algebraic technique in Section~\ref{sec:yang-baxter}. Our main goal in this paper is to give a new and more direct proof of the commutation property. In fact, we will show that Theorem~\ref{thm:matrices-commute} follows easily from the existence of a certain combinatorial bijection involving pairs of plaquette rows.

Denote by $\rows:=\{\leftplaq,\rightplaq\}^L$ the set of rows of $L$ plaquettes (each row still being thought of as being arranged around a cylinder). We denote elements of $\rows^2$ by $\binom{\rho_2}{\rho_1}$, where $\rho_1,\rho_2\in \rows$, and think of this as a $2$-row circular arrangement of plaquettes in which $\rho_2$ is placed above $\rho_1$. Denote by $\pat\binom{\rho_2}{\rho_1}$ the connectivity pattern of the $2L=4n$ endpoints on the top and bottom sides of the arrangement.

\begin{thm}[Pattern-preserving involution]
\label{thm:involution}
There exists a map $V: \rows^2 \to \rows^2$ with the following properties:
\begin{enumerate}
\item $V$ is an involution: $V\circ V = \textnormal{Id}$.
\item $V$ is pattern-preserving: $\pat\circ V = \pat$.
\item $V$ switches the numbers of plaquettes of each type between the two rows; that is,
if $\rho_1$ has $j$ plaquettes of type $\leftplaq$ and $\rho_2$ has $k$ plaquettes of type $\leftplaq$, and $\binom{\rho_2'}{\rho_1'}=V\binom{\rho_2}{\rho_1}$, then
$\rho_1'$ has $k$ plaquettes of type $\leftplaq$ and $\rho_2'$ has $j$ plaquettes of type $\leftplaq$.
\item $V$ respects the rotational symmetry of the model, that is, it satisfies $V\circ R = R \circ V$, where $R$ is the operator that rotates pairs of rows by one plaquette in the counterclockwise direction.
\end{enumerate}

\end{thm}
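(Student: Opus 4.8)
The plan is to reduce a two-row configuration to purely local combinatorial data and then to build $V$ from a single local move swept once around the cylinder. First I would record exactly what one vertical pair of plaquettes does. Writing the six ports of the $i$-th column as $b_i,t_i$ (the bottom and top endpoints) and $\lambda_{i-1},\mu_{i-1},\lambda_i,\mu_i$ (the lower and upper contact points shared with the neighboring columns on the left and right), a direct check of the four cases $\binom{\leftplaq}{\leftplaq}$, $\binom{\rightplaq}{\rightplaq}$, $\binom{\leftplaq}{\rightplaq}$, $\binom{\rightplaq}{\leftplaq}$ shows that each column always induces three arcs of a fixed shape: $b_i$ joins a lower contact point, $t_i$ joins an upper contact point, and the remaining arc joins a lower to an upper contact point. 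Hence $\pat\binom{\rho_2}{\rho_1}$ is obtained by gluing these $L$ boxes cyclically along the $\lambda$'s and $\mu$'s. Two observations then organize the picture: a \emph{matched} column ($\rho_1(i)=\rho_2(i)$) transmits its horizontal strand across, while a \emph{mismatched} column is a cap that turns the strand around on its left or on its right; and the discrepancy between $\#\leftplaq(\rho_1)$ and $\#\leftplaq(\rho_2)$ is governed by the difference between the numbers of left-caps and right-caps. Thus property~3 amounts to interchanging the two cap types.

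Next I would introduce the combinatorial Yang–Baxter move. Picture a ``defect'' living on the mid-line between the two rows, drawn as a crossing that interchanges the lower and upper levels. The heart of the matter is a local bijection $\Phi$ that slides this crossing past one column, realizing the identity $(\text{crossing})\circ(\text{column})=(\text{column}')\circ(\text{crossing})$ of planar tangles, where $\Phi$ records how the column type changes. Because a crossing that merely swaps the two levels squares to the identity (we are in a loop, not a braid, setting), this move is its own inverse. I would verify $\Phi$ by exhausting the finitely many local cases and checking that the arcs, with the crossing drawn in, connect the same ports before and after. With $\Phi$ in hand, define $V$ by inserting the defect at a fixed reference gap between two columns, sweeping it once around the cylinder while applying $\Phi$ at each column, and removing it when it returns to the reference gap; periodicity of the cylinder is exactly what guarantees the defect comes back to a removable position.

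The four properties should then fall out. Pattern-preservation (property~2) holds because every local application of $\Phi$ preserves connectivity, so the endpoints joined by each strand are unchanged after the full sweep. Rotational equivariance (property~4) is immediate because $\Phi$ and the sweep are translation-invariant, giving $V\circ R=R\circ V$. The count-exchange (property~3) follows because sweeping the level-swapping defect all the way around interchanges the role of the bottom and top rows, which is precisely the exchange of left- and right-caps noted above; and $V\circ V=\textnormal{Id}$ (property~1) because the defect squares to the identity, so two sweeps cancel. The main obstacle I anticipate is twofold, and both parts concern the sweep rather than the bookkeeping. First, pinning down $\Phi$ so that the local Yang–Baxter identity genuinely holds as an equality of planar tangles: this is the finite case check, and it is where pattern-preservation is really earned. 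Second, the global consistency of the sweep: showing that the defect can be introduced and removed at the reference gap so that $V$ is well-defined, independent of where one cuts the cylinder, and exactly involutive. I would handle the latter by a planar-isotopy argument, continuously sliding the defect line across the annulus and reading off that the induced matching on the $b_i$'s and $t_i$'s stays invariant throughout.
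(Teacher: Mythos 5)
Your construction hinges on the local move $\Phi$, and that move does not exist: there is no column $\binom{\beta'}{\alpha'}$ making the deterministic exchange identity $(\text{crossing})\circ(\text{column})=(\text{column}')\circ(\text{crossing})$ true as an equality of six-point tangles. Here is a concrete failure, using the convention that $\leftplaq$ joins its top edge to its left edge and its bottom edge to its right edge (the mirror-image convention fails on the mirror-image column). Take the matched column $C=\binom{\leftplaq}{\leftplaq}$. Its three arcs join: the bottom endpoint to the lower-right port, the top endpoint to the upper-left port, and the lower-left port to the upper-right port (the transmitted strand). Now insert the level-swapping crossing on the left of $C$ and compose: the crossing's lower-left external port gets routed to $C$'s upper-left port and hence to the \emph{top} endpoint. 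But, as you yourself observe in your first paragraph, every column of two plaquettes joins its top endpoint either to its upper-left port or to its upper-right port; consequently, in any composite $(\text{column}')\circ(\text{crossing})$ the top endpoint connects either to the upper-left external port or (through the crossing) to the lower-right external port --- never to the lower-left one. So no choice of column$'$ reproduces the connectivity of $(\text{crossing})\circ C$, and the ``finite case check'' you defer to is precisely where the argument collapses. (One can check that the exchange in fact fails for all four column types, not just this one.)

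This failure is not an artifact of conventions; it is the structural point the paper makes in its Section 4. In this model the object that does satisfy a Yang--Baxter-type exchange relation is not a crossing but a \emph{random} auxiliary $1\times 2$ plaquette, a mixture of the pass-through and turn-back tangles with weights $s(p,q)$ and $1-s(p,q)$ (Lemma 4.1 of the paper); the resulting relation is an identity of (signed) measures, not of configurations. Sweeping that auxiliary plaquette around the cylinder is exactly the paper's Yang--Baxter proof of the row-switching invariance (Theorem 4.3), and the paper argues explicitly that this yields only summation identities, with no evident refinement to a configuration-by-configuration bijection --- in particular nothing like your property 3, which demands an exact exchange of plaquette counts for each individual pair of rows. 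The paper's actual construction of $V$ is of a completely different nature: it identifies the disjoint maximal ``fundamental blocks'' of the row pair (a run of columns $\binom{\rightplaq}{\leftplaq}$, then one arbitrary column, then a run of columns $\binom{\leftplaq}{\rightplaq}$) and rotates each such block by $180$ degrees; pattern preservation is a local lemma about such blocks, and involutivity follows because the set of maximal fundamental blocks is itself invariant under the operation. If you want to rescue a ``defect-sweeping'' proof, the defect must carry internal state (a weighted superposition), and then involutivity and the counting property are exactly what you lose.
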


In the next section we explain the construction of the involution $V$ and prove that it satisfies the desired properties. In Section~\ref{sec:proof-commutation} we use Theorem~\ref{thm:involution} to prove Theorem~\ref{thm:matrices-commute}. In Section~\ref{sec:yang-baxter} we compare our approach to the proof of Theorem~\ref{thm:matrices-commute} based on the Yang-Baxter equation.
Section~\ref{sec:final-remarks} has additional remarks.

\section{Construction of the involution}

\label{sec:construction}

To construct the involution $V$, we first define \textbf{block operations}, which are operations that can be performed on a part of a row pair, and which under certain conditions preserve the connectivity pattern. Given numbers $a,b\in \{1,\ldots,L\}$, $a\neq b$, denote by $[a,b]$ the discrete interval of positions from $a$ to $b$, where in the case $a>b$ this is interpreted in the sense of circle arithmetic, that is, $[a,b]=\{a,a+1,\ldots,L,1,2,\ldots,b\}$. Given a row pair $\binom{\rho_2}{\rho_1}=\left(\begin{smallmatrix} y_1 & y_2 & \ldots & y_L \\ x_1 & x_2 & \ldots & x_L \end{smallmatrix}\right)$, the block operator $B_{[a,b]}$ associated with the interval $[a,b]$ transforms $\binom{\rho_2}{\rho_1}$ by rotating the contiguous block of columns indexed by the numbers in $[a,b]$ by 180 degrees around its center. In the case $a<b$ this gives
$$ B_{[a,b]}\binom{\rho_2}{\rho_1} =
\left(\begin{array}{ccccccccccc}
y_1 & y_2  & \ldots & y_{a-1} & x_b & x_{b-1} & \ldots & x_a & y_{b+1} & \ldots & y_L  \\
x_1 & x_2  & \ldots & x_{a-1} & y_b & y_{b-1} & \ldots & y_a & x_{b+1} & \ldots & x_L
\end{array} \right),
$$
and the case $a>b$ is analogous.
Note that for disjoint intervals $[a,b],[c,d]$, the operators $B_{[a,b]}$ and $B_{[c,d]}$ commute.

Say that an interval $[a,b]$ is a \textbf{fundamental interval} of the row pair $\binom{\rho_2}{\rho_1}=\left(\begin{smallmatrix} y_1 & y_2 & \ldots & y_L \\ x_1 & x_2 & \ldots & x_L \end{smallmatrix}\right)$ if the block
$\binom{\rho_2}{\rho_1}_{\raisebox{2pt}{\big|}[a,b]}$ is of the form
$$
\raisebox{-7pt}{\textrm{$
\begin{pmatrix} y_a & y_{a+1} & \ldots & y_b \\ x_a & x_{a+1} & \ldots & x_b \end{pmatrix}
=$}
}
\raisebox{-8pt}{\bigg(} \begin{array}{c}
\overbrace{ \vphantom{\leftplaq\atop\rightplaq} {\rightplaq\atop\leftplaq} \ \ {\rightplaq\atop\leftplaq} \ \ {\cdots \atop \cdots}\ \  {\rightplaq\atop\leftplaq} }^{j\textrm{ columns}}
\ \ \displaystyle {\beta \atop \alpha} \ \
\overbrace{ {\leftplaq\atop\rightplaq} \ \ {\leftplaq\atop\rightplaq} \ \  {\cdots \atop \cdots} \ \  {\leftplaq\atop\rightplaq} }^{k\textrm{ columns}}
\end{array}
\raisebox{-8pt}{\bigg)}
$$
for some $\alpha,\beta\in\{\leftplaq,\rightplaq\}$ and $j,k\ge 0$ such that $j+k>0$. In this case the block
$\binom{\rho_2}{\rho_1}_{\raisebox{2pt}{\big|}[a,b]}$ is called a \textbf{fundamental block}
of $\binom{\rho_2}{\rho_1}$.

\begin{lem}
\label{lem:block-pattern-inv}
If $\binom{\rho_2}{\rho_1}_{\raisebox{2pt}{\big|}[a,b]}$ is a fundamental block for the row pair $\binom{\rho_2}{\rho_1}$, and its columns are not all equal, then $B_{[a,b]}\binom{\rho_2}{\rho_1}$ has the same connectivity pattern as $\binom{\rho_2}{\rho_1}$.
\end{lem}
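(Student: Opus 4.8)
The plan is to reduce the lemma to a local computation of the way a fundamental block wires up its own boundary, and then to observe that this wiring is left unchanged by the rotation. The connectivity pattern of the whole row pair is obtained by gluing the block on $[a,b]$ to the rest of the tiling along the two vertical edges bounding the columns $[a,b]$, and recording the induced matching of the $2L$ top and bottom endpoints (discarding any closed loops that form). Since $B_{[a,b]}$ changes nothing outside the columns $[a,b]$, it suffices to show that the block and its $180^\circ$ rotation induce the same pairing on the $2m+4$ boundary ports of the block, where $m=b-a+1$; these ports are the $m$ top half-edges, the $m$ bottom half-edges, and the two half-edges on each of the block's left and right vertical sides. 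In fact I will establish the stronger claim that, for a fundamental block whose columns are not all equal, this boundary pairing depends only on $m$ and not on the internal data $j,k,\alpha,\beta$.

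To compute the pairing I would first read off, from the arcs of the two plaquette types, the three arcs drawn by each column (a vertical stack of two plaquettes) among its six ports: a top port, a bottom port, and a pair of ports on each vertical side. A direct inspection shows that each of the two pure column types $\binom{\rightplaq}{\leftplaq}$ and $\binom{\leftplaq}{\rightplaq}$ is a \emph{cap} — mirror images of one another — and that inside a fundamental block they are oriented so as to cap toward the central column. The heart of the computation is a telescoping argument for the two pure runs. In the left run of $\binom{\rightplaq}{\leftplaq}$ columns, the cap of each column ties together the two left-side ports of the next column to its right; as that next column connects its own top and bottom ports to those two (now joined) ports, its top becomes joined to its bottom. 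Propagating along the run, every column except the leftmost is closed top-to-bottom, while the leftmost column — having no left neighbour to cap its left ports — instead routes its top and bottom ports to the two left boundary ports of the block. The right run of $\binom{\leftplaq}{\rightplaq}$ columns behaves as the mirror image, closing every interior column top-to-bottom and routing the rightmost column's top and bottom ports to the two right boundary ports.

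It remains to treat the central column $\binom{\beta}{\alpha}$. Its two neighbours (the innermost columns of the pure runs) short its left pair of ports and its right pair of ports, and a short check over the four possibilities for $\binom{\beta}{\alpha}$ shows that, with these shorts in place, the central column always contributes exactly the arc joining its top port to its bottom port. Assembling the pieces gives the announced pairing: every interior column is closed top-to-bottom, while the top and bottom ports of the two outermost columns are wired to the corresponding left and right boundary ports — a description with no dependence on $j,k,\alpha,\beta$. The degenerate cases $j=0$ or $k=0$, in which the central column is itself outermost, are covered by the same four-case check, and it is exactly here that the hypothesis that the columns are not all equal is used: were it violated, the block would be a pure run $\binom{\rightplaq}{\leftplaq}^m$ or $\binom{\leftplaq}{\rightplaq}^m$, whose pairing instead caps the two boundary ports on one side together and is therefore not of the above form.

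Finally, a $180^\circ$ rotation carries the fundamental block $\binom{\rightplaq}{\leftplaq}^j\binom{\beta}{\alpha}\binom{\leftplaq}{\rightplaq}^k$ to the fundamental block $\binom{\rightplaq}{\leftplaq}^k\binom{\alpha}{\beta}\binom{\leftplaq}{\rightplaq}^j$, which has the same width $m$ and again has columns that are not all equal. By the width-only formula just established, the two blocks induce identical boundary pairings, so after gluing back the unchanged remainder of the tiling the two connectivity patterns coincide. I expect the only real work to be the telescoping of the pure runs together with the four-case verification for the central column — in particular the degenerate end cases, which are what force the non-degeneracy hypothesis; the reduction step and the final rotation bookkeeping are routine.
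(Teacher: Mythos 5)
Your proof is correct and follows essentially the same route as the paper: both arguments localize the claim to the block, showing that a fundamental block whose columns are not all equal induces the ``standard'' pairing of its $2m+4$ boundary ports (every interior column closed top-to-bottom, the two outermost columns routed to the adjacent side ports), a pattern depending only on the width $m$ and hence manifestly preserved by the $180^\circ$ rotation. Your telescoping of the pure runs and the four-case check for the central column make explicit what the paper verifies by pictures for $j=2,k=4$ and then asserts in general via its schematic diagram, and you additionally carry out the degenerate cases $j=0$ or $k=0$ --- exactly where the non-degeneracy hypothesis is needed --- which the paper notes but omits as an easy verification.
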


\begin{proof}
It suffices to check that the connectivity pattern of the $2(j+k+1)+4$ endpoints around the four (horizontal \emph{and} vertical) boundary edges of the block $[a,b]$) remains unchanged by the block operation $B_{[a,b]}$. For example, in the case $j=2,k=4$, we have the picture
\begin{center}
\begin{tabular}{ccc}
\scalebox{0.6}{\includegraphics{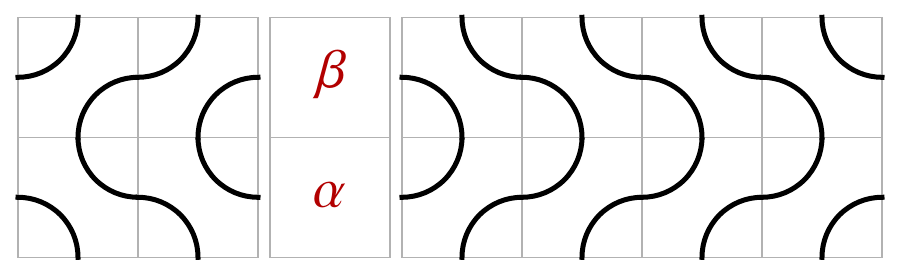}} &
\raisebox{21pt}{$\longrightarrow$} &
\scalebox{0.6}{\includegraphics{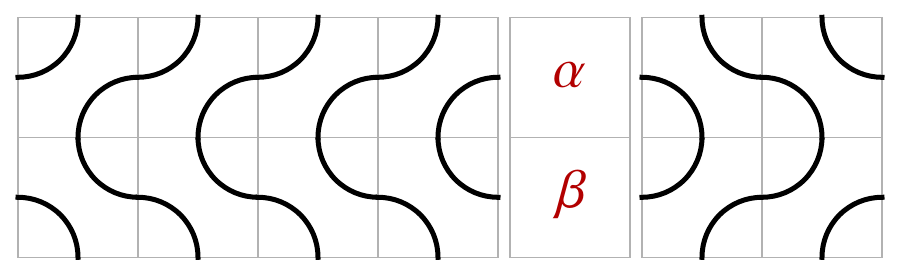}}
\end{tabular}
\end{center}
where there are four choices for $\binom{\beta}{\alpha}$. When $\binom{\beta}{\alpha}=\binom{\rightplaq}{\rightplaq}$ this becomes
\begin{center}
\begin{tabular}{ccc}
\scalebox{0.6}{\includegraphics{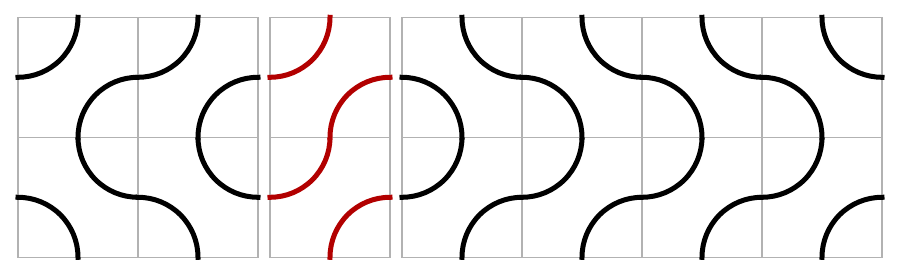}} &
\raisebox{21pt}{$\longrightarrow$} &
\scalebox{0.6}{\includegraphics{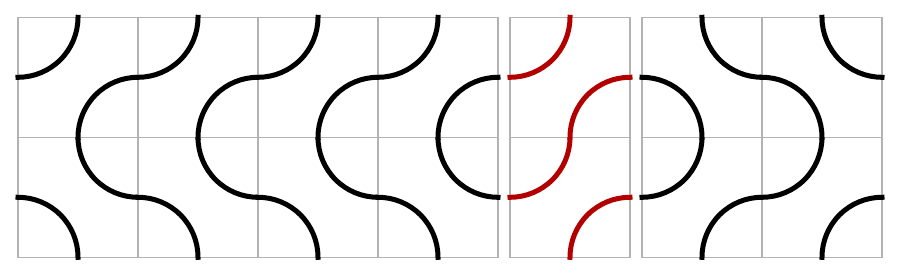}}
\end{tabular}
\end{center}
and we see that the connectivities of the $18$ boundary endpoints are indeed unchanged. Similarly, when $\binom{\beta}{\alpha}=\binom{\leftplaq}{\rightplaq}$
the picture is
\begin{center}
\begin{tabular}{ccc}
\scalebox{0.6}{\includegraphics{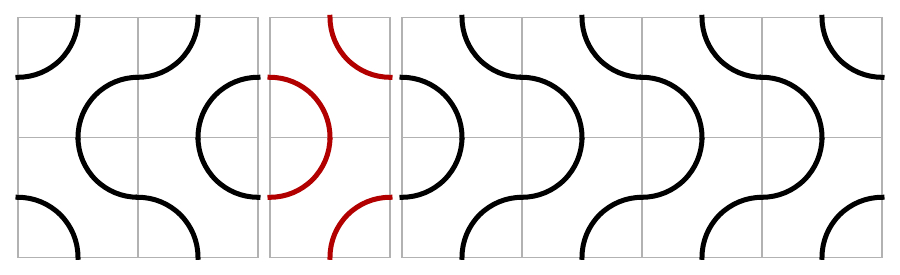}} &
\raisebox{21pt}{$\longrightarrow$} &
\scalebox{0.6}{\includegraphics{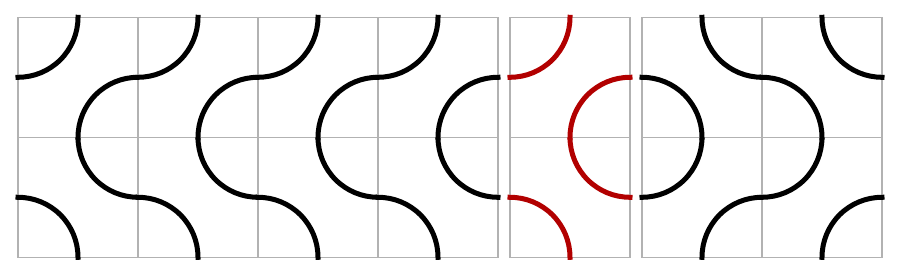}}
\end{tabular}
\end{center}
and the claim is also satisfied. The remaining two cases
$\binom{\beta}{\alpha}=\binom{\leftplaq}{\leftplaq}, \binom{\rightplaq}{\leftplaq}$
are similarly easy to verify. While this illustrates the claim for the specific values $j=2,k=4$, it is clear that the same argument applies generally for any values $j,k\ge 1$:
for any of the possible choices for $\alpha,\beta$, the connectivity pattern is in this case represented schematically by the diagram
\begin{center}
\scalebox{0.6}{\includegraphics{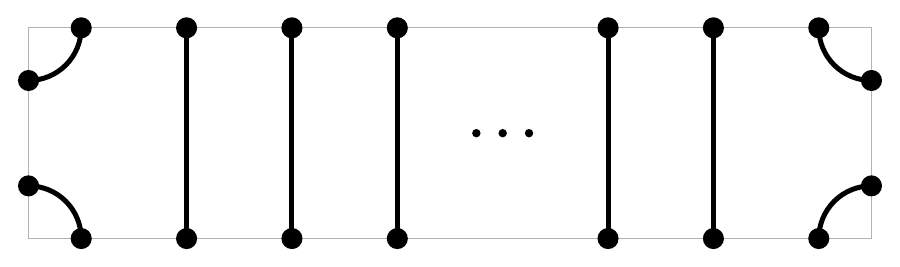}}
\end{center}
and is clearly invariant under the operation of rotating the block by 180 degrees around its center. Finally, when one of $j$ or $k$ is $0$, the connectivity pattern is still represented by the schematic diagram above, but one needs to check this separately, noting that the assumption of the lemma requires that $\binom{\beta}{\alpha} \neq \binom{\leftplaq}{\rightplaq}$ in the case when $j=0$, or
$\binom{\beta}{\alpha} \neq \binom{\rightplaq}{\leftplaq}$ in the case when $k=0$. We omit this easy verification.
\end{proof}

We are now ready to define the involution $V$. The idea is to apply the block operator $B_{[a,b]}$ for any fundamental interval of $\binom{\rho_2}{\rho_1}$ that is \emph{maximal} (with respect to containment). First, define $V$ explicitly for two special configurations by setting
\begin{align*}
V\begin{pmatrix}
\rightplaq & \rightplaq & \cdots & \rightplaq \\
\leftplaq & \leftplaq & \cdots & \leftplaq
\end{pmatrix}
&:=
\begin{pmatrix}
\leftplaq & \leftplaq & \cdots & \leftplaq \\
\rightplaq & \rightplaq & \cdots & \rightplaq
\end{pmatrix},
\\[3pt]
V\begin{pmatrix}
\leftplaq & \leftplaq & \cdots & \leftplaq \\
\rightplaq & \rightplaq & \cdots & \rightplaq
\end{pmatrix}
&:=
\begin{pmatrix}
\rightplaq & \rightplaq & \cdots & \rightplaq \\
\leftplaq & \leftplaq & \cdots & \leftplaq
\end{pmatrix}.
\end{align*}
Next, for any row pair $\binom{\rho_2}{\rho_1}
 \neq
\begin{pmatrix}
\rightplaq & \rightplaq & \cdots & \rightplaq \\
\leftplaq & \leftplaq & \cdots & \leftplaq
\end{pmatrix}, \begin{pmatrix}
\leftplaq & \leftplaq & \cdots & \leftplaq \\
\rightplaq & \rightplaq & \cdots & \rightplaq
\end{pmatrix}$ from among the remaining $4^L-2$ possibilities, let $[a_1,b_1], \ldots, [a_k,b_k]$ be the maximal fundamental intervals of the pair. We tentatively define
$$ V\binom{\rho_2}{\rho_1} := B_{[a_1,b_1]} \cdots B_{[a_k,b_k]} \binom{\rho_2}{\rho_1}. $$
The next lemma implies that $V$ is well-defined.

\begin{lem}
\label{lem:blocks-disjoint}
Given a row pair
$\binom{\rho_2}{\rho_1}
 \neq
\begin{pmatrix}
\rightplaq & \rightplaq & \cdots & \rightplaq \\
\leftplaq & \leftplaq & \cdots & \leftplaq
\end{pmatrix}, \begin{pmatrix}
\leftplaq & \leftplaq & \cdots & \leftplaq \\
\rightplaq & \rightplaq & \cdots & \rightplaq
\end{pmatrix}$,
any two maximal fundamental intervals are disjoint, and in the case when there's only one maximal fundamental interval $[a,b]$ such that $[a,b]=[1,L]$, the description of the interval as a fundamental interval is unique (that is, $a$ and $b$ are uniquely determined).
\end{lem}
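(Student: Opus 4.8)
The plan is to recast everything in terms of a single cyclic word over a three-letter alphabet and to reduce the statement to elementary facts about ``centers'' of fundamental intervals. First I would encode each column of $\binom{\rho_2}{\rho_1}$ by a symbol: set $W_i=\leftsymb$ if the $i$-th column is $\binom{\rightplaq}{\leftplaq}$, set $W_i=\rightsymb$ if it is $\binom{\leftplaq}{\rightplaq}$, and set $W_i=\centersymb$ if it is $\binom{\rightplaq}{\rightplaq}$ or $\binom{\leftplaq}{\leftplaq}$. With this dictionary the definition of a fundamental block becomes the following criterion: an interval $[a,b]$ of length at least $2$ is fundamental if and only if it admits a \emph{center}, that is, a position $m\in[a,b]$ with $W_i=\leftsymb$ for all $i\in[a,m-1]$ and $W_i=\rightsymb$ for all $i\in[m+1,b]$ (and $W_m$ arbitrary). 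The two excluded configurations are exactly $W=\leftsymb^L$ and $W=\rightsymb^L$, and a fundamental interval contains at most one $\centersymb$, which if present must be its unique center.

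The combinatorial heart of the argument is a merge lemma: if $[a_1,b_1]$ and $[a_2,b_2]$ are fundamental, overlap, and their union is a proper sub-arc of the circle, then the union is again fundamental. To prove this I would cut the circle at a point of the (nonempty) complement of the union so as to work on a line, arrange the intervals as $a_1\le a_2\le b_1\le b_2$, and locate the two centers $m_1,m_2$. A short case analysis on the positions of $m_1,m_2$ relative to the overlap $[a_2,b_1]$ suffices: if a center lies outside the overlap, then the overlap is forced to be monochromatic (all $\rightsymb$ or all $\leftsymb$) and the outside center also serves as a center of the union; if both centers lie inside the overlap, comparing the two descriptions there forces $|m_1-m_2|\le 1$, after which one of $m_1,m_2$ is a center of the union. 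This immediately yields disjointness whenever two distinct maximal fundamental intervals overlap in a single arc, since the union is then a strictly larger fundamental interval, contradicting maximality (distinct maximal intervals cannot be nested).

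The remaining, and I expect most delicate, case for disjointness is when two distinct maximal fundamental intervals $I_1,I_2$ overlap with $I_1\cup I_2$ equal to the whole circle, so the merge lemma does not apply. Here I would use maximality directly. Since neither is the whole circle (that would make it the unique maximal interval), the complementary arcs $J_1,J_2$ are nonempty and disjoint, and the circle decomposes cyclically as $J_1\,A\,J_2\,B$ with $I_1=A\,J_2\,B$, $I_2=B\,J_1\,A$, and overlap $A\cup B$. Maximality of $I_1$ forces the first letter of $J_1$ to differ from $\rightsymb$ and its last letter to differ from $\leftsymb$, and likewise for $J_2$ and $I_2$. Feeding these endpoint constraints into the center criterion shows the center of $I_1$ cannot lie in $A$ or in $B$ (either would make an endpoint of $J_2$ equal to its forbidden letter), so it lies in $J_2$, whence $A$ is all $\leftsymb$ and $B$ is all $\rightsymb$; symmetrically the center of $I_2$ lies in $J_1$, whence $A$ is all $\rightsymb$ and $B$ is all $\leftsymb$. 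Thus $A$ and $B$ are empty, so $I_1$ and $I_2$ do not overlap after all, a contradiction completing the disjointness claim.

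Finally, for the uniqueness clause I would assume the whole circle $[1,L]$ is fundamental (equivalently, it is the unique maximal fundamental interval) and argue by the number of $\centersymb$'s, which is $0$ or $1$. If there is exactly one $\centersymb$, it must be the center of any fundamental description, so the $\leftsymb$'s form one contiguous arc and the $\rightsymb$'s the complementary arc, and the cut point, hence $a$ and $b$, is forced. If there is no $\centersymb$, then, since $W\ne\leftsymb^L,\rightsymb^L$, both letters occur and the word read cyclically is $\leftsymb^p\rightsymb^q$ with a unique $\rightsymb\leftsymb$ junction; the cut must be made there, so $a$ and $b$ are again uniquely determined (only the center, at the last $\leftsymb$ or the first $\rightsymb$, may remain ambiguous, as the statement allows). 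The main obstacle throughout is the cyclic bookkeeping in the whole-circle overlap case; once maximality is translated into the endpoint constraints on $J_1,J_2$ above, the contradiction falls out cleanly.
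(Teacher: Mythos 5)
Your proof is correct, but it takes a genuinely different route from the paper's. (A cosmetic point first: your arrow dictionary is the reverse of the paper's, which sets $\leftsymb:=\binom{\leftplaq}{\rightplaq}$ and $\rightsymb:=\binom{\rightplaq}{\leftplaq}$; since you are consistent throughout, nothing is affected.) The paper works with the same cyclic three-letter word, but its key step is to recast maximality as local membership rules---a symbol lies in a maximal fundamental block iff it is an arrow, or a $\centersymb$ with a suitable arrow neighbor; the neighbor to the right of a ``$\rightarrow$'' (paper's convention) belongs to that arrow's maximal block, and the neighbor to the left of a ``$\leftarrow$'' belongs to that arrow's block---and disjointness then follows because the maximal block containing any given symbol is reconstructed uniquely by greedy growth to the left and then to the right; the whole-circle uniqueness claim is dispatched as ``similar reasoning.'' Your key lemma is instead a merge lemma (two overlapping fundamental intervals whose union is a proper sub-arc have fundamental union), which contradicts maximality directly; you supplement it with a separate treatment of the genuinely cyclic case $I_1\cup I_2=[1,L]$ via the $J_1\,A\,J_2\,B$ decomposition and the constraints maximality places on the letters bordering $J_1$ and $J_2$, and with an explicit zero-or-one-$\centersymb$ argument for the uniqueness clause. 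I checked your case analyses (centers inside/outside the overlap, the forced monochromatic overlap, the emptiness of $A$ and $B$) and they are sound; the one caution is that a fundamental interval can admit two valid centers (e.g.\ the two-letter word $\leftsymb\rightsymb$ in your convention), so ``the center'' must be read as ``a fixed choice of center,'' which is all your argument uses. The trade-off: the paper's rules give a very short proof and double as an algorithm for locating the maximal blocks (used in its worked example and again, implicitly, in the proof of Theorem~\ref{thm:involution}), but the rules themselves are asserted rather than proved and the two delicate points---the wrap-around configuration and the uniqueness of the description when $[1,L]$ itself is fundamental---are left to the reader; your proof is longer but supplies exactly those details.
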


Note that the second claim of the lemma is needed to resolve a possible ambiguity in the definition of $V$ in the case when the entire cycle $[1,L]$ is a fundamental interval. The first claim implies that the order in which the block operators $B_{[a_j,b_j]}$ are applied is unimportant.

\begin{proof} For visualization purposes, it is convenient to think of the lemma as a symbolic claim about circular strings of letters over the alphabet $\{ \leftarrow, \rightarrow, * \}$, where the letters in this alphabet correspond to columns of two plaquettes according to the rule
$$
 \leftarrow\  := \binom{\leftplaq}{\rightplaq}, \qquad \rightarrow\  := \binom{\rightplaq}{\leftplaq}, \qquad *\, := \binom{\leftplaq}{\leftplaq} \textrm{ or }\binom{\rightplaq}{\rightplaq}.
$$
With this schematic representation, it is easy to see that the rules for finding the maximal fundamental intervals in a circular string $s \in \{ \leftarrow, \rightarrow, * \}^L$ can be reformulated as follows:
\begin{enumerate}
\item A symbol $x$ is part of a maximal fundamental block if and only if it is a ``$\leftarrow$'', a ``$\rightarrow$'', or it is a ``$*$'' with a ``$\leftarrow$'' to its right or a ``$\rightarrow$'' to its left.
\item Any symbol $x$ to the right of a symbol $y=\textrm{``}\rightarrow\textrm{''}$ belongs to the maximal fundamental block that contains $y$.
\item Any symbol $x$ to the left of a symbol $y=\textrm{``}\leftarrow\textrm{''}$ belongs to the maximal fundamental block that contains $y$.
\end{enumerate}
It is now easy to prove the disjointness claim: starting from a symbol $x$ that forms part of a maximal fundamental block, one can uniquely identify the maximal fundamental block to which it belongs by successively moving to the left and growing the block if it is allowed by the rules until one can go no further, and then doing the same to grow the block to the right from $x$.
As a small example, starting from the string
$$
\centersymb
\spac
\centersymb
\spac
\rightsymb
\spac
\rightsymb
\spac
\rightsymb
\spac
\leftsymb
\spac
*
\spac
*
\spac
\rightsymb
\spac
*
\spac
*
\spac
\rightsymb
\spac
\rightsymb
\spac
*
\spac
\leftsymb
\spac
\leftsymb
\spac
*
\spac
\leftsymb
\spac
*
\spac
*
\spac
*
\spac
\leftsymb
\spac
\rightsymb
\spac
*
$$
 and applying the above procedure, we find the following maximal fundamental blocks:
$$
\centersymb
\spac
\centersymb
\spac
\textrm{\fbox{$
\rightsymb
\spac
\rightsymb
\spac
\rightsymb
\spac
\leftsymb
$}}
\spac
*
\spac
*
\spac
\textrm{\fbox{$
\rightsymb
\spac
*
$}}
\spac
*
\spac
\textrm{\fbox{$
\rightsymb
\spac
\rightsymb
\spac
*
\spac
\leftsymb
\spac
\leftsymb
$}}
\spac
\textrm{\fbox{$
*
\spac
\leftsymb
$}}
\spac
*
\spac
*
\spac
\textrm{\fbox{$
*
\spac
\leftsymb
$}}
\spac
\textrm{\fbox{$
\rightsymb
\spac
*
$}}
$$
The second uniqueness claim that pertains to the case when one maximal fundamental block spans the entire cycle $[1,L]$ follows from similar reasoning.
\end{proof}

\begin{proof}[Proof of Theorem~\ref{thm:involution}]
Having defined the mapping $V$, we now prove that it satisfies the claims of the Theorem. For the two special cases for which $V$ was defined separately the claim is trivial, so we focus on the remaining cases $\binom{\rho_2}{\rho_1}
 \neq \textrm{``}\rightarrow\  \rightarrow \ldots\rightarrow \textrm{''}, \textrm{``}\leftarrow\  \leftarrow \ldots\leftarrow \textrm{''}$
 (using the notation from the proof of Lemma~\ref{lem:blocks-disjoint}). Note also that the definition of $V$ automatically satisfies the rotational equivariance property 4 of the Theorem, so it remains to prove properties 1--3.

We start by claiming that the maximal fundamental intervals in $V\binom{\rho_2}{\rho_1}$ are the same as those for $\binom{\rho_2}{\rho_1}$. To see this, it is enough to show that each block rotation operation $B_{[a_j,b_j]}$ preserves the set of maximal fundamental intervals. We verify this in the case when there is no maximal fundamental block spanning the entire cycle $[1,L]$, and leave that case as an exercise to the reader. First, let us check that $[a_j,b_j]$ is still a maximal fundamental interval after the rotation. Indeed, the definition of a fundamental interval is symmetric under the operation of rotating the block 180 degrees around its center, so after the operation $[a_j,b_j]$ is still a fundamental interval; its maximality is ensured by the fact that the letter in position $a_j-1$ (which was not changed by the operation $B_{[a_j,b_j]}$) is not a ``$\rightarrow$'', and the letter in position $b_j+1$ is not a ``$\leftarrow$'', so the block cannot grow in either direction and still remain fundamental. Next, we check the claim for the remaining intervals. They of course preserve their fundamental property (since they were untouched by the rotation of $[a_j,b_j]$), and if for any such interval $[a_k,b_k]$ the rotation of $[a_j,b_j]$ caused it to lose its maximality (which is only a theoretical possibility for the intervals adjacent to $[a_j,b_j]$) that must mean that the maximal fundamental interval containing $[a_k,b_k]$ now intersects $[a_j,b_j]$, in contradiction to the disjointness claim of Lemma~\ref{lem:blocks-disjoint}. So the claim is proved.

As a consequence of the above claim, we immediately get the result that $V$ is an involution. Next, note that, by the list of rules formulated in the proof of Lemma~\ref{lem:blocks-disjoint}, a maximal fundamental block cannot consist of just ``$\leftarrow$'' symbols or of just ``$\rightarrow$'' symbols,
unless it spans the entire cycle $[1,L]$, but we have specifically excluded those two cases. It follows that the maximal fundamental blocks satisfy the assumption of Lemma~\ref{lem:block-pattern-inv}, and therefore each of the block operations $B_{[a_j,b_j]}$ preserves the connectivity pattern; hence $V$ does so as well. This is the second property of $V$ claimed in the theorem.

Finally, the third claimed property concerning the effect $V$ has on the numbers of plaquettes of type $\leftplaq$ and $\rightplaq$ in the top and bottom rows is straightforward, since clearly the switching of the numbers of plaquettes of type $\leftplaq$ between the rows occurs within each maximal fundamental block, and the only columns that are not affected by block operations (that is, that do not belong to a fundamental block) are of the form $\binom{\leftplaq}{\leftplaq}$ or $\binom{\rightplaq}{\rightplaq}$, and therefore contribute an equal number of plaquettes of type $\leftplaq$ to both rows.
\end{proof}

\section{Proof of the commutation of transfer matrices}

\label{sec:proof-commutation}

Theorem~\ref{thm:involution} easily implies the commutation property in Theorem~\ref{thm:matrices-commute}, as follows: let $p,q \in [0,1]$. Let $\rho_p$, $\rho_q$ denote two independently chosen random rows of plaquettes, with $\rho_p$ being a row of $p$-biased plaquettes and $\rho_q$ being a row of $q$-biased plaquettes. By standard properties of Markov transition matrices, for each $\pi,\pi'\in\nonc_n$, the $(\pi,\pi')$-entry of the product
$T_L^{(p)} T_L^{(q)}$ of transfer matrices can be expressed as
$$
\left(T_L^{(p)} T_L^{(q)}\right)_{\pi,\pi'} = \prob( \pi' = \rho_q ( \rho_p (\pi)) )
=
\prob\left( \pi' = \binom{\rho_p}{\rho_q} (\pi) \right),
$$
where in the second equality we generalize the notation $\rho(\pi)$ defined in the Introduction to include a more general action $C(\pi)$ representing the graphical composition (in the same sense as that of Figure~\ref{fig:graphical-composition}) of a cylindrical column $C$ consisting of several rows of plaquettes (in the case above, $C=\binom{\rho_p}{\rho_q}$ has two rows) and a noncrossing matching $\pi\in\nonc_n$. We can now use the involution $V$. By its property of being pattern-preserving, we have
$\binom{\rho_p}{\rho_q} (\pi)=\left(V\binom{\rho_p}{\rho_q} \right)(\pi)$. But by the fact that $V$ is an involution and its switching effect on the numbers of plaquettes of either type between the top and bottom rows,
the row pair $V\binom{\rho_p}{\rho_q}$ is equal in distribution to $\binom{\rho_q}{\rho_p}$, so we can finally write that
\begin{align*}
\left(T_L^{(p)} T_L^{(q)}\right)_{\pi,\pi'}
&= \prob\left( \pi' = \left(V\binom{\rho_p}{\rho_q} \right)(\pi) \right)
\\ &= \prob\left( \pi' = \binom{\rho_q}{\rho_p} (\pi) \right)
=
\left(T_L^{(q)} T_L^{(p)}\right)_{\pi,\pi'},
\end{align*}
which proves the result.
\qed

\bigskip
In the proof above we argued using probabilistic language. A more explicit version of the same argument proceeds by writing the matrix coefficients $t_{\pi,\pi'}^{(p)}$ of $T_L^{(p)}$ explicitly as
$$
t_{\pi,\pi'}^{(p)} = \sum_{\rho \in \{\leftplaq,\rightplaq\}^L \atop \pi'=\rho(\pi)} p^{\nu_\leftplaq(\rho)} (1-p)^{\nu_\rightplaq(\rho)},
$$
where we denote $\nu_x(\rho) = \#\{ 1\le j\le L\,:\, \rho_j = x \}$. One then has that
\begin{align*}
(T_L^{(p)} T_L^{(q)})_{\pi,\pi'} &= \sum_{\tau \in \nonc_n} t_{\pi,\tau}^{(p)} t_{\tau,\pi'}^{(q)}
\\ &=
\sum_{\tau \in \nonc_n} \sum_{\rho_2\in \{\leftplaq,\rightplaq\}^L \atop \tau=\rho_2(\pi)}
\sum_{\rho_1\in \{\leftplaq,\rightplaq\}^L \atop \pi'=\rho_1(\tau)}
p^{\nu_\leftplaq(\rho_2)} (1-p)^{\nu_\rightplaq(\rho_2)}
q^{\nu_\leftplaq(\rho_1)} (1-q)^{\nu_\rightplaq(\rho_1)}
\\ &=
\sum_{\binom{\rho_2}{\rho_1} \in \rows^2, \  \pi'=\binom{\rho_2}{\rho_1}(\pi)}
p^{\nu_\leftplaq(\rho_2)} (1-p)^{\nu_\rightplaq(\rho_2)}
q^{\nu_\leftplaq(\rho_1)} (1-q)^{\nu_\rightplaq(\rho_1)},
\end{align*}
and similarly
$$
(T_L^{(q)} T_L^{(p)})_{\pi,\pi'} =
\sum_{\binom{\rho_2'}{\rho_1'} \in \rows^2,\  \pi'=\binom{\rho_2'}{\rho_1'}(\pi)}
q^{\nu_\leftplaq(\rho_2')} (1-q)^{\nu_\rightplaq(\rho_2')}
p^{\nu_\leftplaq(\rho_1')} (1-p)^{\nu_\rightplaq(\rho_1')}.
$$
The fact that the above two sums are equal follows immediately from Theorem~\ref{thm:involution} by making the substitution $\binom{\rho_2'}{\rho_1'} = V \binom{\rho_2}{\rho_1}$.

\section{Comparison with the Yang-Baxter approach}

\label{sec:yang-baxter}

It is instructive to compare our combinatorial approach to the proof of the commutation of the transfer matrices based on the Yang-Baxter equation. Recall that the Yang-Baxter equation was the name given by Fadeev and his collaborators \cite{fadeev, fadeevetal} to a family of algebraic relations (which are also sometimes referred to as star-triangle relations) that appeared in the study of lattice statistical physics models and in certain other contexts related, e.g., to knot theory; see the surveys \cite{jimbo, perk-auyang} for more details.

The use of the Yang-Baxter algebraic technique to prove the commutation of a family of transfer matrices is standard (see \cite{baxter}), but let us review a version of this argument tailored to our setting and consider it in the light of our new approach to the proof of Theorem~\ref{thm:matrices-commute}. For example, one might wonder if our pattern-preserving involution $V$ is somehow implicit in (or equivalent to) the algebraic manipulations of the Yang-Baxter argument. We argue that this is not the case, that is, that our proof provides a genuinely new approach to understanding the commutation property, and moreover, that the algebraic proof is somewhat weaker, in the sense that there is no clear way to reformulate it in terms of a combinatorial bijection.

The first step in the application of the technique is to formulate a version of the Yang-Baxter equation for the model. We use the notation~\pplaquette{$p$} to denote a $p$-biased random plaquette (sampled independently of all other plaquettes under discussion), and also introduce the notion of an \textbf{auxiliary plaquette}, denoted by \auxplaquette{$s$} and defined as a random $1\times 2$ rectangle taking the possible values
\begin{center}
\raisebox{-22pt}{\scalebox{0.16}{\includegraphics{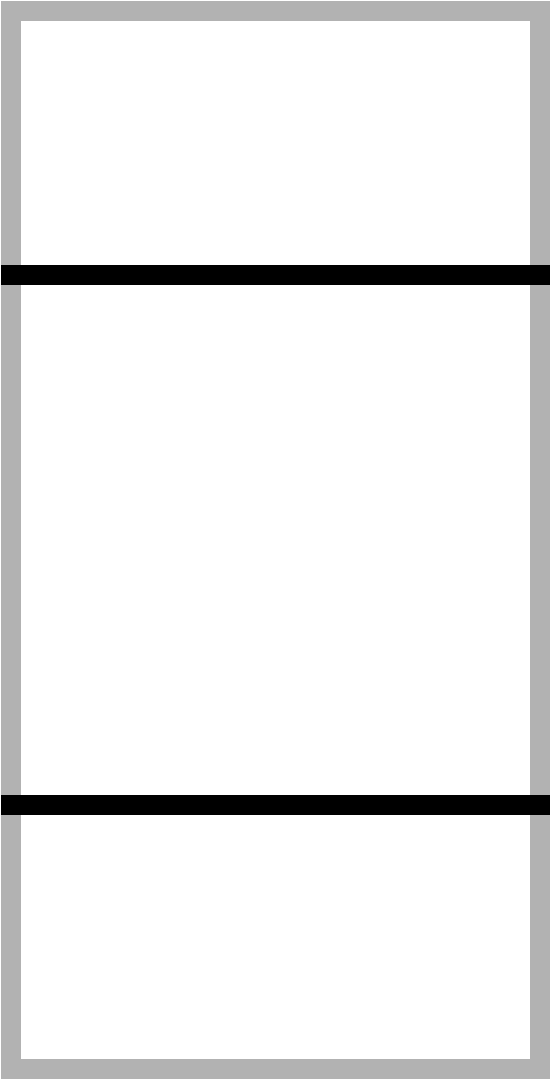}}} \ \
(with prob.\ $s$), \qquad
\raisebox{-22pt}{\scalebox{0.16}{\includegraphics{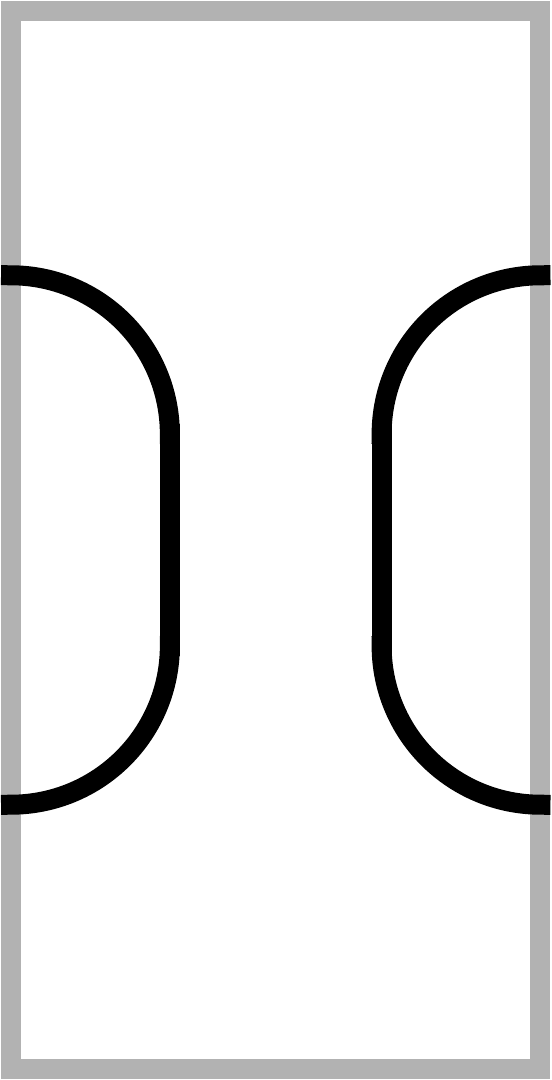}}} \ \
(with prob.\ $1-s$).
\end{center}
\medskip

\begin{lem}
\label{lem:version-yang-baxter}
For any probabilities $0\le p\le q \le 1$ there exists a probability $s=s(p,q)$ such that the connectivity patterns of the boundary points in the two arrangements of random plaquettes
\begin{center}
\setlength{\unitlength}{0.4pt}
\begin{picture}(100,140)(0,-20)
\put(0,0){\framebox(50,100){$s$}}
\put(50,0){\framebox(50,50){$q$}}
\put(50,50){\framebox(50,50){$p$}}
\put(0,25){\circle*{8}}
\put(0,75){\circle*{8}}
\put(75,0){\circle*{8}}
\put(75,100){\circle*{8}}
\put(100,25){\circle*{8}}
\put(100,75){\circle*{8}}
\put(-20,20){$5$}
\put(-20,70){$4$}
\put(70,-25){$6$}
\put(70,110){$3$}
\put(110,20){$1$}
\put(110,70){$2$}
\end{picture}
\quad\qquad \raisebox{34pt}{and}
\quad\qquad
\begin{picture}(100,140)(0,-20)
\put(50,0){\framebox(50,100){$s$}}
\put(0,0){\framebox(50,50){$p$}}
\put(0,50){\framebox(50,50){$q$}}
\put(0,25){\circle*{8}}
\put(0,75){\circle*{8}}
\put(25,0){\circle*{8}}
\put(25,100){\circle*{8}}
\put(100,25){\circle*{8}}
\put(100,75){\circle*{8}}
\put(-20,20){$5$}
\put(-20,70){$4$}
\put(20,-25){$6$}
\put(20,110){$3$}
\put(110,20){$1$}
\put(110,70){$2$}
\end{picture}
\end{center}
are equal in distribution.
\end{lem}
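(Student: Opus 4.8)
The plan is to prove this by a direct finite verification, exploiting the fact that each of the two arrangements has only $2^3=8$ tile states and that the six boundary endpoints admit only five connectivity patterns.

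First I would record that in both arrangements the endpoints $1,\dots,6$ appear in the \emph{same} counterclockwise cyclic order around the boundary, so each arrangement induces a probability distribution on the set of noncrossing perfect matchings of $\{1,\dots,6\}$, of which there are exactly $C_3=5$. Each arrangement is a mixture of $2^3=8$ deterministic tile configurations: two states for the auxiliary plaquette (with probabilities $s$ and $1-s$) and two states for each of the plaquettes marked $p$ and $q$. For a fixed configuration the resulting matching is computed by tracing the loop strands through the three interior edges shared by adjacent tiles. Here one must use the $O(1)$ rule that any strand closing into a loop carrying no boundary endpoint is simply discarded (it contributes weight $1$); such closed loops do occur for a few configurations and must not be mishandled. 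Carrying out this trace for all $8+8=16$ configurations yields, for each arrangement, the weight of each of the five matchings as an explicit polynomial in $p,q,s$.

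Next I would equate the two distributions matching-by-matching. The naive count is four independent equations (the weights sum to $1$) in the single unknown $s$, so the system looks over-determined; the content of the lemma is that it is nonetheless consistent. I expect that for three of the five matchings the two weights agree \emph{identically} as polynomials in $p,q,s$, while the equations coming from the two remaining matchings both collapse to one and the same linear relation in $s$. Solving that relation gives an explicit rational expression; under one natural choice of tile conventions it comes out to $s(p,q)=\dfrac{1-q(1-p)}{1-p(1-q)}$, and the hypothesis $p\le q$ is precisely what guarantees $s(p,q)\in[0,1]$ (indeed $s=1$ when $p=q$, corresponding to a transparent auxiliary line, and $s$ decreases toward $0$ as $q-p$ grows). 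This is the crux, and the reason the relation is an instance of Yang--Baxter integrability rather than an accident.

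The main obstacle is not conceptual but combinatorial bookkeeping. One must first pin down the connectivity of the two states of the auxiliary plaquette and of the tiles $\leftplaq,\rightplaq$; a reversed convention flips the factor $q-p$ to $p-q$ and pushes the unique solution outside $[0,1]$, so this step needs care and a consistency check against the degenerate case $p=q$. After that, tracing all sixteen configurations correctly is the only real labor; once the table is assembled, the collapse of the four equations to one is automatic and the verification $s(p,q)\in[0,1]$ reduces to the inequality $p\le q$. It is worth noting that this computation also explains why the identity resists a bijective reformulation: the two sides have equally many configurations but carry incomparable weights (one weighted by $s$, the other by $p$ and $q$), so the equality is genuinely an identity of probability mixtures rather than a weight-preserving pairing of configurations --- consistent with the paper's contention that the Yang--Baxter route is not naturally combinatorial.
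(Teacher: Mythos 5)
Your proposal is correct and follows essentially the same route as the paper's own proof: enumerate the $2^3$ tile states on each side, read off the induced distribution on the five noncrossing matchings of the six endpoints, observe that three of the weight equalities hold identically while the remaining two coincide in a single linear equation for $s$, and solve it. Your formula $s(p,q)=\frac{1-q(1-p)}{1-p(1-q)}$ is algebraically identical to the paper's $s(p,q)=\frac{1-q+pq}{1-p+pq}$, and your observation that $p\le q$ is exactly what keeps $s\in[0,1]$ matches the paper's remark as well.
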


\begin{proof}
For each of the two arrangements of random plaquettes, the eight possibilities for the three plaquettes and their probabilities induce a measure on the five connectivity patterns
\begin{center}
\begin{tabular}{ccccc}
\scalebox{0.18}{\includegraphics{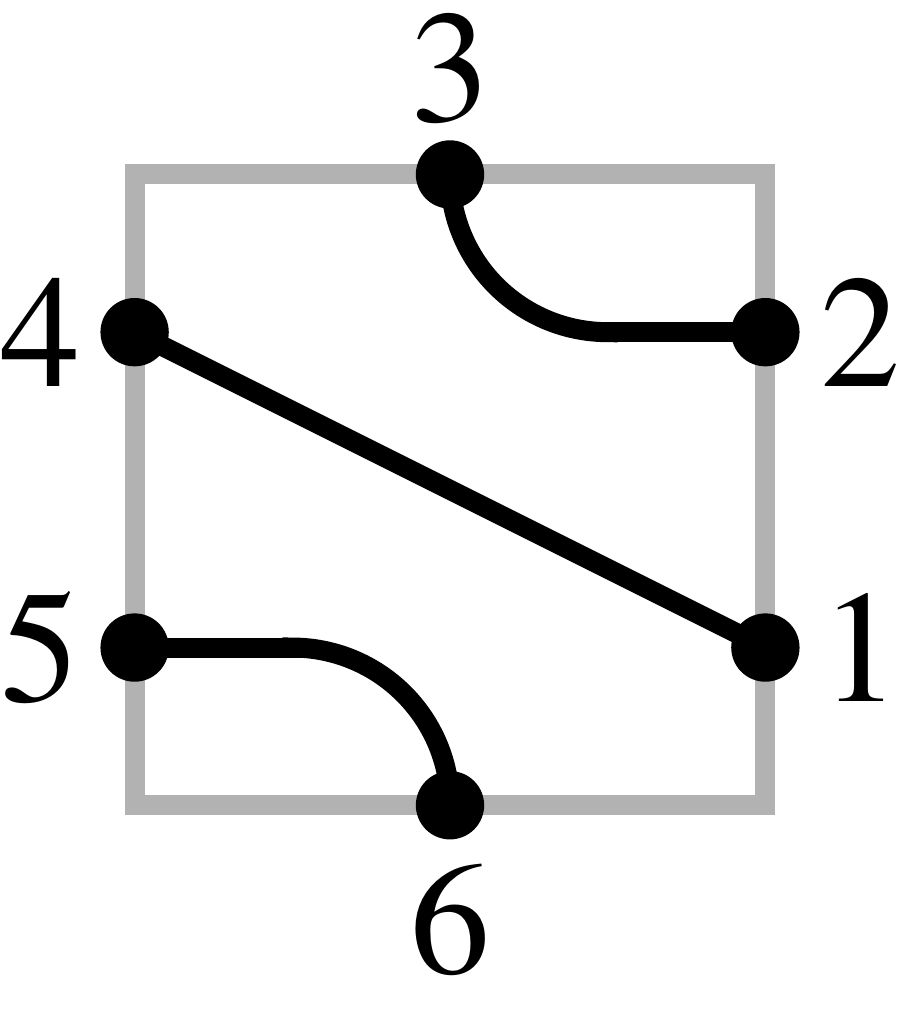}} &
\scalebox{0.18}{\includegraphics{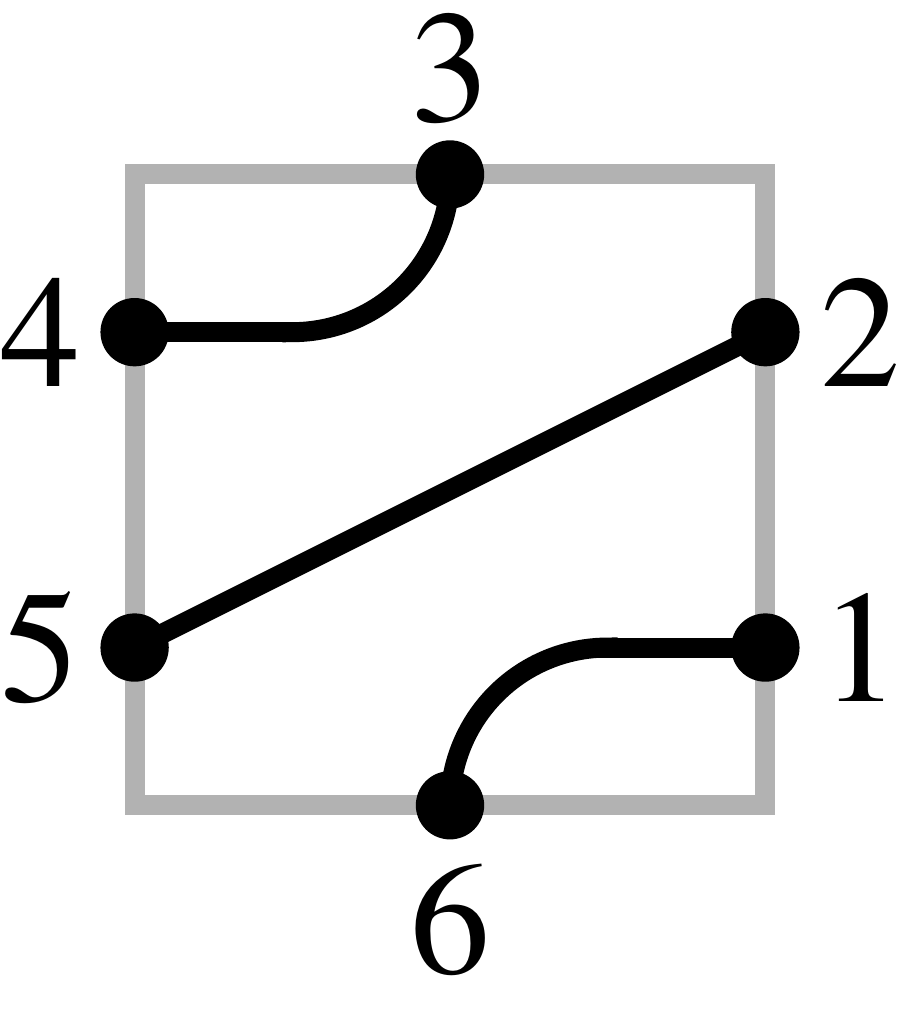}} &
\scalebox{0.18}{\includegraphics{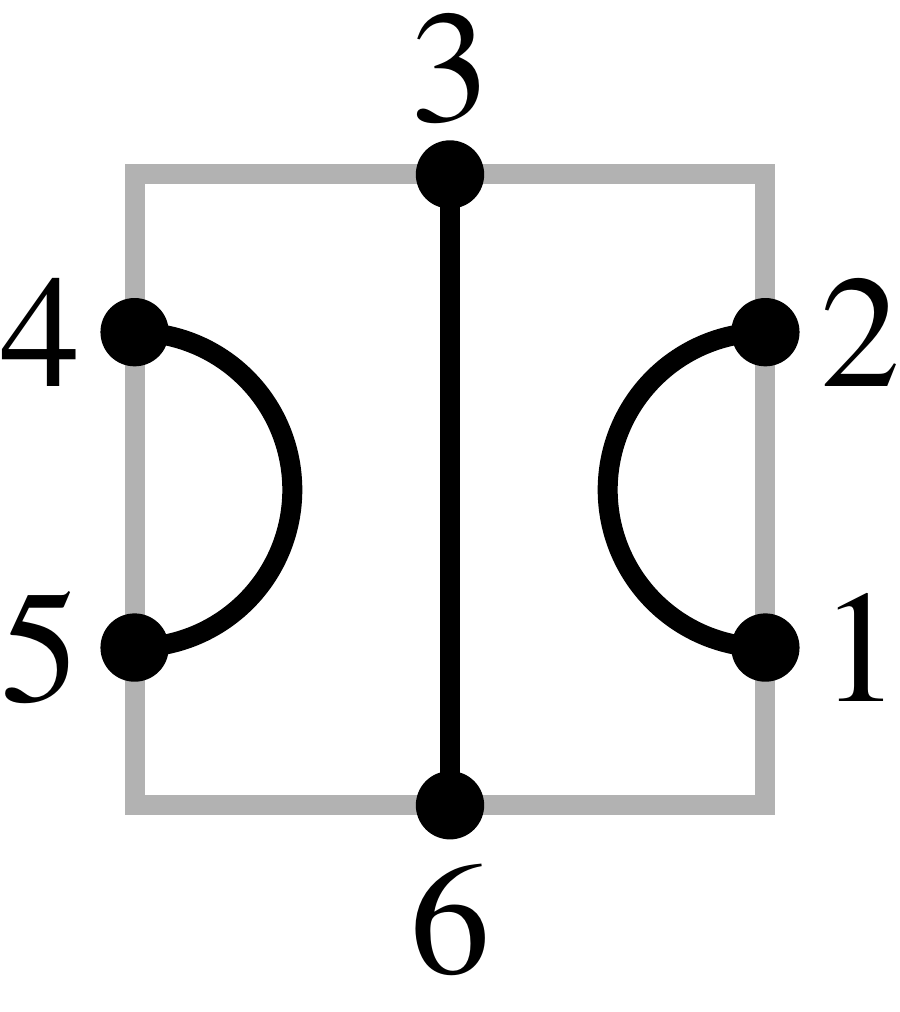}} &
\scalebox{0.18}{\includegraphics{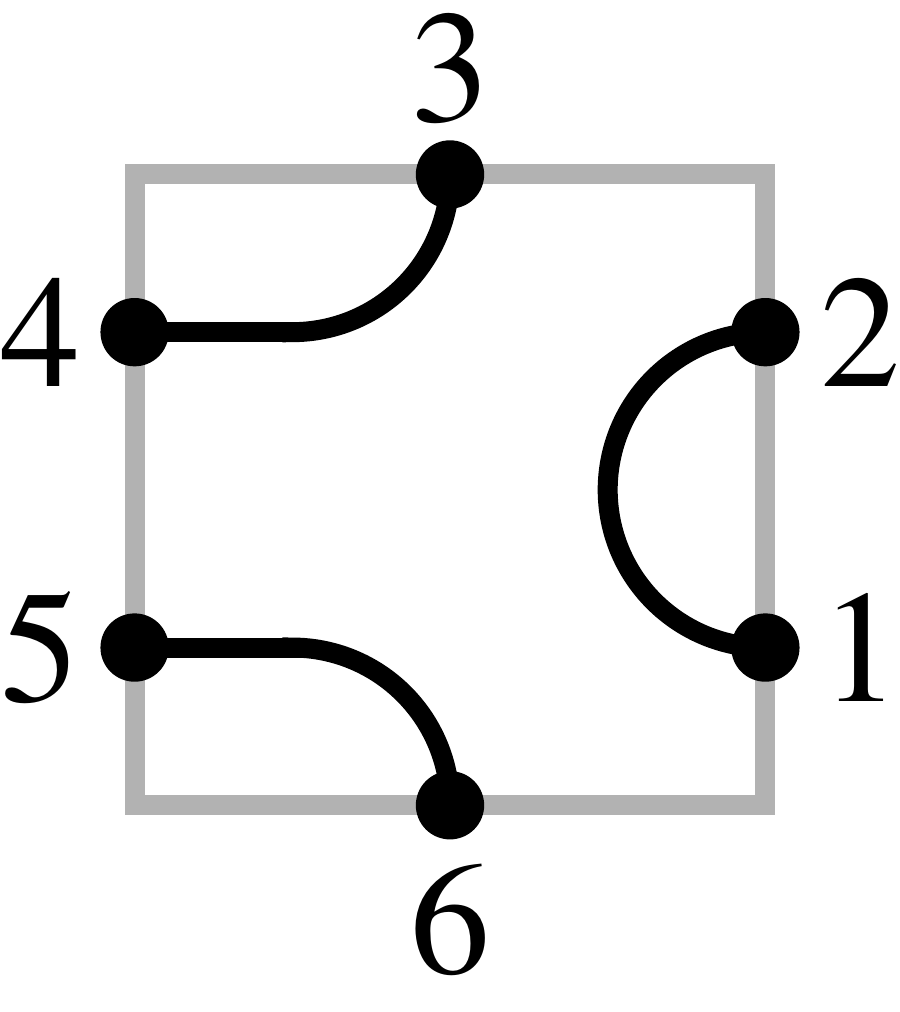}} &
\scalebox{0.18}{\includegraphics{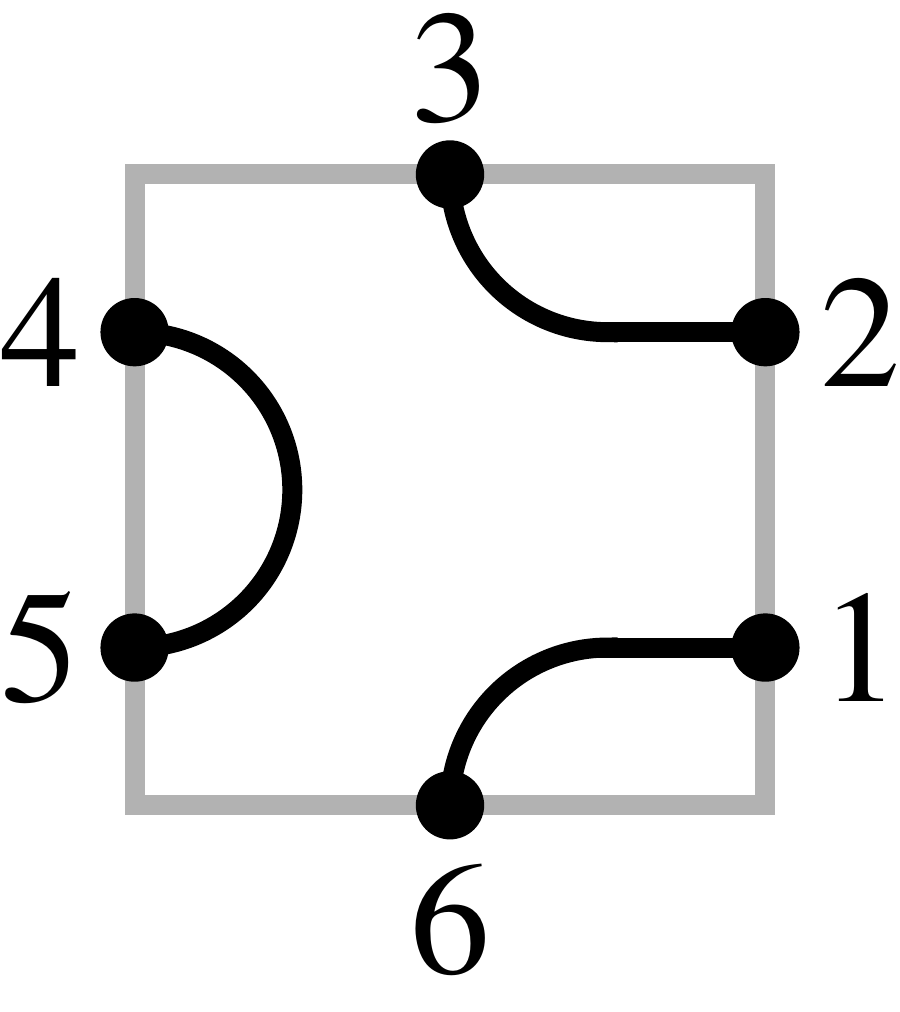}}
\\ $\pi_1$ & $\pi_2$ & $\pi_3$ & $\pi_4$ & $\pi_5$
\end{tabular}
\end{center}
A quick computation shows that the respective probabilities of $\pi_1,\ldots,\pi_5$ are
\begin{align*}
a_1 &= pqs, \\
a_2 &= (1-p)(1-q)s, \\
a_3 &= (1-p)q(1-s), \\
a_4 &= (1-p)qs, \\
a_5 &= pq(1-s) + p(1-q)s + p(1-q)(1-s) + (1-p)(1-q)(1-s),
\end{align*}
in the case of the first arrangement of random plaquettes, and
\begin{align*}
b_1 &= pqs, \\
b_2 &= (1-p)(1-q)s, \\
b_3 &= (1-p)q(1-s), \\
b_4 &= pq(1-s) + p(1-q)s + p(1-q)(1-s) + (1-p)(1-q)(1-s), \\
b_5 &= (1-p)qs,
\end{align*}
for the second arrangement. It follows that the requirement that $a_k=b_k$, $k=1,\ldots,5$, can be satisfied provided that the equation
$$
(1-p)qs = pq(1-s) + p(1-q)s + p(1-q)(1-s) + (1-p)(1-q)(1-s)
$$
holds. Solving for $s$ gives the unique solution
\begin{equation*}
s = s(p,q) = \frac{1-q+pq}{1-p+pq} \in [0,1].
\qedhere
\end{equation*}
\end{proof}

Note that the condition $p\le q$ in the lemma is required to ensure that $s(p,q)$ is a probability. However, it is clear that in the case $p>q$ the lemma remains valid as an algebraic (rather than probabilistic) claim, whereby the stated equality in distribution is interpreted as an equality between two \emph{signed} measures, as long as $1-p+pq\neq 0$.

A second observation required for the application of the Yang-Baxter technique is the following trivial claim.

\begin{lem}
\label{lem:trivial}
Given numbers $s,t\ge 0$, we have the equality in distribution (again, in the sense of signed measures when $s$ or $t$ are not probabilities) of the connectivity patterns of the random plaquette arrangements
\begin{center}
\setlength{\unitlength}{0.4pt}
\begin{picture}(100,140)(0,-20)
\put(0,0){\framebox(50,100){$s$}}
\put(50,0){\framebox(50,100){$t$}}
\put(0,25){\circle*{8}}
\put(0,75){\circle*{8}}
\put(100,25){\circle*{8}}
\put(100,75){\circle*{8}}
\put(-20,20){$4$}
\put(-20,70){$3$}
\put(110,20){$1$}
\put(110,70){$2$}
\end{picture}
\quad \qquad
\raisebox{33pt}{and}
\quad \qquad
\begin{picture}(50,140)(0,-20)
\put(0,0){\framebox(50,100){$st$}}
\put(0,25){\circle*{8}}
\put(0,75){\circle*{8}}
\put(50,25){\circle*{8}}
\put(50,75){\circle*{8}}
\put(-20,20){$4$}
\put(-20,70){$3$}
\put(60,20){$1$}
\put(60,70){$2$}
\end{picture}
\end{center}
\end{lem}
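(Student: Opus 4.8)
The plan is to reduce the lemma to a short finite check by describing explicitly how two horizontally adjacent auxiliary plaquettes compose. Recall that an auxiliary plaquette carries exactly four endpoints, two on its left edge and two on its right edge, and admits only two internal connectivities: the \emph{pass-through} pattern, which joins each left endpoint to the right endpoint at the same height, and the \emph{cap} pattern, which joins the two left endpoints to each other and the two right endpoints to each other. For a plaquette labelled $u$, I will take these to receive weights $u$ and $1-u$ respectively; when $u\notin[0,1]$ these are the weights of a signed measure, exactly as in the statement.

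First I would fix notation: label the two outer left endpoints and the two outer right endpoints, and introduce two \emph{interface} endpoints where the right edge of the $s$-plaquette is glued to the left edge of the $t$-plaquette. Because the two plaquettes are sampled independently, the joint law on the $2\times 2=4$ configurations is the product measure (or a product of signed measures), and the induced connectivity on the four outer endpoints is read off by tracing strands across the interface. I would then enumerate the four cases. Only the (pass, pass) configuration produces the pass-through pattern on the outer endpoints; each of the three remaining configurations produces the cap pattern. In the (cap, cap) case the two interface endpoints are joined on both sides, creating a closed loop, which is discarded (it carries weight $1$ in this model), again leaving the cap pattern.

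Collecting weights then finishes the argument. The composite realizes the pass-through pattern exactly when both plaquettes do, hence with weight $st$, and realizes the cap pattern with the complementary weight $(1-s)(1-t)+(1-s)t+s(1-t)=1-st$. These coincide with the weights $st$ and $1-st$ carried by the two connectivities of a single auxiliary plaquette labelled $st$, which is precisely the asserted equality in distribution. Since the final relation is a polynomial identity in $s$ and $t$, it holds verbatim for all $s,t\ge 0$ with the weights read as signed measures, so no positivity is used anywhere. This is a genuinely routine finite check; the only point needing even slight care---and the closest thing to an obstacle---is the bookkeeping of the discarded closed loop in the (cap, cap) case, together with the observation that it is the pass-through weight rather than the cap weight that is multiplicative, which is exactly what makes the label on the combined plaquette come out to $st$.
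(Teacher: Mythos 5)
Your proof is correct. The paper in fact states this lemma without any proof (introducing it as a ``trivial claim''), and your four-case enumeration is exactly the routine verification the authors intend: the two points you flag --- that the closed loop created in the (cap, cap) case carries weight $1$ and so is simply discarded, and that it is the pass-through connectivity whose weight is multiplicative --- are precisely the content of the lemma, and your convention that the label $u$ weights the pass-through value is the one forced by the paper's later use of the result (inserting auxiliary plaquettes with biases $s$ and $1/s$ in the proof of Theorem~\ref{thm:row-switching-inv} must amount to inserting a bias-$1$, i.e.\ identity, plaquette).
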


With this preparation, we are ready to prove:

\begin{thm}[Row-switching invariance]
\label{thm:row-switching-inv}
Given probabilities $p,q\in [0,1]$ and an integer $L=2n$, the connectivity pattern of the arrangement of random plaquettes
\begin{center}
\setlength{\unitlength}{0.45pt}
\begin{picture}(500,140)(0,-20)
\put(0,50){\framebox(50,50){$p$}}
\put(50,50){\framebox(50,50){$p$}}
\put(100,50){\framebox(50,50){$p$}}
\put(150,50){\framebox(50,50){$p$}}
\put(0,0){\framebox(50,50){$q$}}
\put(50,0){\framebox(50,50){$q$}}
\put(100,0){\framebox(50,50){$q$}}
\put(150,0){\framebox(50,50){$q$}}
\put(200,0){\framebox(200,100){$\cdots$}}
\put(400,0){\framebox(50,50){$q$}}
\put(450,0){\framebox(50,50){$q$}}
\put(400,50){\framebox(50,50){$p$}}
\put(450,50){\framebox(50,50){$p$}}
\put(25,0){\circle*{8}}
\put(75,0){\circle*{8}}
\put(125,0){\circle*{8}}
\put(175,0){\circle*{8}}
\put(425,0){\circle*{8}}
\put(475,0){\circle*{8}}
\put(25,100){\circle*{8}}
\put(75,100){\circle*{8}}
\put(125,100){\circle*{8}}
\put(175,100){\circle*{8}}
\put(425,100){\circle*{8}}
\put(475,100){\circle*{8}}
\put(20,-25){$1$}
\put(70,-25){$2$}
\put(120,-25){$3$}
\put(170,-25){$4$}
\put(395,-25){$L-1$}
\put(470,-25){$L$}
\put(20,110){$1'$}
\put(70,110){$2'$}
\put(120,110){$3'$}
\put(170,110){$4'$}
\put(380,110){$(L-1)'$}
\put(470,110){$L'$}
\end{picture}
\end{center}
is equal in distribution to the connectivity pattern of the symmetric arrangement
\begin{center}
\setlength{\unitlength}{0.45pt}
\begin{picture}(500,140)(0,-20)
\put(0,50){\framebox(50,50){$q$}}
\put(50,50){\framebox(50,50){$q$}}
\put(100,50){\framebox(50,50){$q$}}
\put(150,50){\framebox(50,50){$q$}}
\put(0,0){\framebox(50,50){$p$}}
\put(50,0){\framebox(50,50){$p$}}
\put(100,0){\framebox(50,50){$p$}}
\put(150,0){\framebox(50,50){$p$}}
\put(200,0){\framebox(200,100){$\cdots$}}
\put(400,0){\framebox(50,50){$p$}}
\put(450,0){\framebox(50,50){$p$}}
\put(400,50){\framebox(50,50){$q$}}
\put(450,50){\framebox(50,50){$q$}}
\put(25,0){\circle*{8}}
\put(75,0){\circle*{8}}
\put(125,0){\circle*{8}}
\put(175,0){\circle*{8}}
\put(425,0){\circle*{8}}
\put(475,0){\circle*{8}}
\put(25,100){\circle*{8}}
\put(75,100){\circle*{8}}
\put(125,100){\circle*{8}}
\put(175,100){\circle*{8}}
\put(425,100){\circle*{8}}
\put(475,100){\circle*{8}}
\put(20,-25){$1$}
\put(70,-25){$2$}
\put(120,-25){$3$}
\put(170,-25){$4$}
\put(395,-25){$L-1$}
\put(470,-25){$L$}
\put(20,110){$1'$}
\put(70,110){$2'$}
\put(120,110){$3'$}
\put(170,110){$4'$}
\put(380,110){$(L-1)'$}
\put(470,110){$L'$}
\end{picture}
\end{center}
where both arrangements are interpreted as \emph{circular} arrangements (i.e., the left and right boundary edges are identified).
\end{thm}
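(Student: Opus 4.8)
The plan is to run the classical ``railroad'' argument: introduce a single auxiliary plaquette, transport it once around the cylinder using the Yang--Baxter move of Lemma~\ref{lem:version-yang-baxter}, thereby switching the two rows one column at a time, and then dispose of the auxiliary plaquette using the fusion identity of Lemma~\ref{lem:trivial}. Before starting I would reduce to the case $0\le p\le q\le 1$. For each fixed target pattern $\tau$, the probability that the arrangement with a top row of $p$-biased plaquettes over a bottom row of $q$-biased plaquettes induces $\tau$ is a polynomial in $p,q$ (a sum of monomials $p^a(1-p)^{L-a}q^b(1-q)^{L-b}$ over the finitely many plaquette configurations producing $\tau$), and similarly for the switched arrangement; hence an identity of these probabilities on the two-dimensional region $\{0\le p\le q\le 1\}$ forces it for all $p,q$. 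In this regime $s=s(p,q)$ is a genuine probability, so Lemma~\ref{lem:version-yang-baxter} may be applied probabilistically.

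Next I would fix $s=s(p,q)$ and insert an auxiliary plaquette of bias $s$ into the circular arrangement at the seam, that is, as an extra vertical column between column $L$ and column $1$. Since an auxiliary plaquette carries endpoints only on its left and right edges, this creates no new boundary points and the $2L$ top/bottom endpoints are untouched. I would then apply Lemma~\ref{lem:version-yang-baxter} repeatedly to slide the auxiliary plaquette rightward one column at a time. Each application converts the $p$-over-$q$ column it crosses into a $q$-over-$p$ column while advancing the auxiliary plaquette by one step, and since every column still ahead of it is of type $p$-over-$q$, the same value $s=s(p,q)$ governs every move. After $L$ applications the auxiliary plaquette has wound once around the cylinder, returned to the seam, and switched all $L$ columns. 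Consequently the augmented arrangement (auxiliary plaquette at the seam, rows $p$ over $q$) has the same connectivity-pattern distribution on the $2L$ endpoints as (auxiliary plaquette at the seam, rows $q$ over $p$).

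It then remains to remove the auxiliary plaquette, and here the periodic geometry differs crucially from the open strip: on a strip the argument ends with the auxiliary plaquette pushed off a free edge, where it is simply discarded, whereas on the cylinder it returns to the seam and must be eliminated in place. To do this I would insert a second auxiliary plaquette of bias $1/s$ immediately beside the one at the seam (interpreted as a signed measure, since $1/s\ge 1$) and fuse the two via Lemma~\ref{lem:trivial}: they combine into a single auxiliary plaquette of bias $s\cdot(1/s)=1$, which is in its through-state with probability $1$ and is therefore transparent, connecting column $L$ directly to column $1$ and reproducing the bare two-row arrangement. Performing this identical operation on both augmented arrangements upgrades the equality of the previous paragraph to the desired equality of the bare arrangements, completing the proof, with the boundary case $(p,q)=(0,1)$ (where $s=0$) absorbed by the polynomiality reduction of the first step.

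I expect the main obstacle to be precisely this periodic closure: one must correctly track how the auxiliary plaquette's left and right edges glue across the seam to columns $L$ and $1$ as it travels around, and verify that fusing in the bias-$1/s$ plaquette is a legitimate local identity at the seam regardless of how the loops of the two rows thread through it. The necessity of passing through signed measures, both in the bias-$1/s$ fusion and, when $p>q$, already in Lemma~\ref{lem:version-yang-baxter} itself, is exactly what makes a direct bijective reading of this argument unnatural, in contrast to the clean pattern-preserving involution $V$ of Theorem~\ref{thm:involution}.
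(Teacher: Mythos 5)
Your proposal follows the same ``railroad'' strategy as the paper's proof---introduce auxiliary plaquettes, transport one around the cylinder by repeated applications of Lemma~\ref{lem:version-yang-baxter}, and dispose of it via Lemma~\ref{lem:trivial}---but the order in which you perform these operations creates a genuine gap at the final step. You first slide a \emph{lone} $s$-biased auxiliary plaquette once around the closed cylinder, concluding that the two augmented circular arrangements (the $s$-plaquette at the seam, with rows $p$-over-$q$, respectively $q$-over-$p$) have equal connectivity-pattern distributions on the $2L$ endpoints; only afterwards do you insert the $1/s$-plaquette into each and fuse, asserting that ``performing this identical operation on both augmented arrangements upgrades the equality.'' That inference is unsound. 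Equality of the pattern distributions on the $2L$ boundary endpoints of two circular arrangements does not license surgery at an interior seam: the outcome of inserting a column at the seam is determined by the connectivity structure of the arrangement \emph{cut open} at the seam (a pattern on $2L+4$ endpoints), which is strictly finer information than the closed-cylinder pattern. Concretely, for $L=2$ the two deterministic row pairs whose columns are $(\leftarrow\ \rightarrow)$ and $(\rightarrow\ \leftarrow)$, in the symbolic notation of Lemma~\ref{lem:blocks-disjoint}, induce the same closed pattern $\{1,2\},\{1',2'\}$; yet if one inserts at the seam an auxiliary plaquette in the state where its two left endpoints are joined to each other and its two right endpoints are joined to each other, one arrangement yields $\{1,1'\},\{2,2'\}$ while the other yields $\{1,2\},\{1',2'\}$. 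So the inference rule you invoke is false in general, and your removal step is unjustified as stated.

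The paper's proof is structured precisely to avoid this trap: it inserts \emph{both} auxiliary plaquettes, with biases $1/s$ and $s$, adjacent to each other at the seam \emph{before} any sliding. This initial insertion is harmless, because by Lemma~\ref{lem:trivial} the adjacent pair is equivalent to a single bias-$1$, i.e.\ transparent, plaquette. The $s$-plaquette is then pushed around the cylinder by Lemma~\ref{lem:version-yang-baxter} in the presence of the stationary $1/s$-plaquette, so that every step of the argument is a local replacement performed inside one fixed ambient arrangement---and such replacements do preserve the global pattern distribution. When the $s$-plaquette returns and is again adjacent to the $1/s$-plaquette (now on its other side), the two are fused by Lemma~\ref{lem:trivial}, producing the bare $q$-over-$p$ arrangement. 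Your argument becomes correct if you adopt this ordering; alternatively, you could repair it by strengthening your intermediate claim to an equality of distributions of the arrangements cut open at the seam (which the same $L$ Yang--Baxter moves do deliver, since inserting a plaquette at a free end of a cut-open arrangement \emph{is} an operation determined by its boundary pattern). The remainder of your proposal is sound: the polynomiality reduction handling $p>q$ and the degenerate point $(p,q)=(0,1)$ is a clean alternative to the paper's dismissal of that case as trivial, and the sliding step itself, with the observation that the same $s=s(p,q)$ governs every move, is exactly as in the paper.
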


\begin{proof} We may assume without loss of generality that $p\le q$, and furthermore assume that $(p,q)\neq (0,1)$, which implies $s(p,q)>0$ (the claim is trivially true in the case $p=0, q=1$).
Starting with the first arrangement of plaquettes, augment it by inserting at the left-hand side two auxiliary plaquettes with biases $s=s(p,q)$ and $1/s$. This gives the picture
\begin{center}
\setlength{\unitlength}{0.45pt}
\begin{picture}(500,150)(-50,-20)
\put(-100,0){\framebox(50,100){$\displaystyle \frac1s$}}
\put(-50,0){\framebox(50,100){$s$}}
\put(0,50){\framebox(50,50){$p$}}
\put(50,50){\framebox(50,50){$p$}}
\put(100,50){\framebox(50,50){$p$}}
\put(150,50){\framebox(50,50){$p$}}
\put(0,0){\framebox(50,50){$q$}}
\put(50,0){\framebox(50,50){$q$}}
\put(100,0){\framebox(50,50){$q$}}
\put(150,0){\framebox(50,50){$q$}}
\put(200,0){\framebox(200,100){$\cdots$}}
\put(400,0){\framebox(50,50){$q$}}
\put(450,0){\framebox(50,50){$q$}}
\put(400,50){\framebox(50,50){$p$}}
\put(450,50){\framebox(50,50){$p$}}
\put(25,0){\circle*{8}}
\put(75,0){\circle*{8}}
\put(125,0){\circle*{8}}
\put(175,0){\circle*{8}}
\put(425,0){\circle*{8}}
\put(475,0){\circle*{8}}
\put(25,100){\circle*{8}}
\put(75,100){\circle*{8}}
\put(125,100){\circle*{8}}
\put(175,100){\circle*{8}}
\put(425,100){\circle*{8}}
\put(475,100){\circle*{8}}
\put(20,-25){$1$}
\put(70,-25){$2$}
\put(120,-25){$3$}
\put(170,-25){$4$}
\put(395,-25){$L-1$}
\put(470,-25){$L$}
\put(20,110){$1'$}
\put(70,110){$2'$}
\put(120,110){$3'$}
\put(170,110){$4'$}
\put(380,110){$(L-1)'$}
\put(470,110){$L'$}
\end{picture}
\end{center}
which by Lemma~\ref{lem:trivial} does not affect the distribution of the connectivity pattern. Then, perform a sequence of moves pushing the $s$-biased auxiliary plaquette to the right, in accordance with the operation described in Lemma~\ref{lem:version-yang-baxter}. This gives a sequence of random plaquette arrangements, all with distributionally equal connectivity patterns, as follows:
\begin{center}
\begin{tabular}{cc}
&
\setlength{\unitlength}{0.45pt}
\begin{picture}(500,140)(-50,-20)
\put(-100,0){\framebox(50,100){$\displaystyle \frac1s$}}
\put(0,0){\framebox(50,100){$s$}}
\put(-50,50){\framebox(50,50){$q$}}
\put(50,50){\framebox(50,50){$p$}}
\put(100,50){\framebox(50,50){$p$}}
\put(150,50){\framebox(50,50){$p$}}
\put(-50,0){\framebox(50,50){$p$}}
\put(50,0){\framebox(50,50){$q$}}
\put(100,0){\framebox(50,50){$q$}}
\put(150,0){\framebox(50,50){$q$}}
\put(200,0){\framebox(200,100){$\cdots$}}
\put(400,0){\framebox(50,50){$q$}}
\put(450,0){\framebox(50,50){$q$}}
\put(400,50){\framebox(50,50){$p$}}
\put(450,50){\framebox(50,50){$p$}}
\put(-25,0){\circle*{8}}
\put(75,0){\circle*{8}}
\put(125,0){\circle*{8}}
\put(175,0){\circle*{8}}
\put(425,0){\circle*{8}}
\put(475,0){\circle*{8}}
\put(-25,100){\circle*{8}}
\put(75,100){\circle*{8}}
\put(125,100){\circle*{8}}
\put(175,100){\circle*{8}}
\put(425,100){\circle*{8}}
\put(475,100){\circle*{8}}
\put(-30,-25){$1$}
\put(70,-25){$2$}
\put(120,-25){$3$}
\put(170,-25){$4$}
\put(395,-25){$L-1$}
\put(470,-25){$L$}
\put(-30,110){$1'$}
\put(70,110){$2'$}
\put(120,110){$3'$}
\put(170,110){$4'$}
\put(380,110){$(L-1)'$}
\put(470,110){$L'$}
\end{picture}
\\[15pt]
\raisebox{33pt}{$\rightarrow$} \hspace{10pt} &
\setlength{\unitlength}{0.45pt}
\begin{picture}(500,140)(-50,-20)
\put(-100,0){\framebox(50,100){$\displaystyle \frac1s$}}
\put(50,0){\framebox(50,100){$s$}}
\put(-50,50){\framebox(50,50){$q$}}
\put(0,50){\framebox(50,50){$q$}}
\put(100,50){\framebox(50,50){$p$}}
\put(150,50){\framebox(50,50){$p$}}
\put(-50,0){\framebox(50,50){$p$}}
\put(0,0){\framebox(50,50){$p$}}
\put(100,0){\framebox(50,50){$q$}}
\put(150,0){\framebox(50,50){$q$}}
\put(200,0){\framebox(200,100){$\cdots$}}
\put(400,0){\framebox(50,50){$q$}}
\put(450,0){\framebox(50,50){$q$}}
\put(400,50){\framebox(50,50){$p$}}
\put(450,50){\framebox(50,50){$p$}}
\put(-25,0){\circle*{8}}
\put(25,0){\circle*{8}}
\put(125,0){\circle*{8}}
\put(175,0){\circle*{8}}
\put(425,0){\circle*{8}}
\put(475,0){\circle*{8}}
\put(-25,100){\circle*{8}}
\put(25,100){\circle*{8}}
\put(125,100){\circle*{8}}
\put(175,100){\circle*{8}}
\put(425,100){\circle*{8}}
\put(475,100){\circle*{8}}
\put(-30,-25){$1$}
\put(20,-25){$2$}
\put(120,-25){$3$}
\put(170,-25){$4$}
\put(395,-25){$L-1$}
\put(470,-25){$L$}
\put(-30,110){$1'$}
\put(20,110){$2'$}
\put(120,110){$3'$}
\put(170,110){$4'$}
\put(380,110){$(L-1)'$}
\put(470,110){$L'$}
\end{picture}
\\[15pt]
\raisebox{33pt}{$\rightarrow$} \hspace{10pt} &
\setlength{\unitlength}{0.45pt}
\begin{picture}(500,140)(-50,-20)
\put(-100,0){\framebox(50,100){$\displaystyle \frac1s$}}
\put(100,0){\framebox(50,100){$s$}}
\put(-50,50){\framebox(50,50){$q$}}
\put(0,50){\framebox(50,50){$q$}}
\put(50,50){\framebox(50,50){$q$}}
\put(150,50){\framebox(50,50){$p$}}
\put(-50,0){\framebox(50,50){$p$}}
\put(0,0){\framebox(50,50){$p$}}
\put(50,0){\framebox(50,50){$p$}}
\put(150,0){\framebox(50,50){$q$}}
\put(200,0){\framebox(200,100){$\cdots$}}
\put(400,0){\framebox(50,50){$q$}}
\put(450,0){\framebox(50,50){$q$}}
\put(400,50){\framebox(50,50){$p$}}
\put(450,50){\framebox(50,50){$p$}}
\put(-25,0){\circle*{8}}
\put(25,0){\circle*{8}}
\put(75,0){\circle*{8}}
\put(175,0){\circle*{8}}
\put(425,0){\circle*{8}}
\put(475,0){\circle*{8}}
\put(-25,100){\circle*{8}}
\put(25,100){\circle*{8}}
\put(75,100){\circle*{8}}
\put(175,100){\circle*{8}}
\put(425,100){\circle*{8}}
\put(475,100){\circle*{8}}
\put(-30,-25){$1$}
\put(20,-25){$2$}
\put(70,-25){$3$}
\put(170,-25){$4$}
\put(395,-25){$L-1$}
\put(470,-25){$L$}
\put(-30,110){$1'$}
\put(20,110){$2'$}
\put(70,110){$3'$}
\put(170,110){$4'$}
\put(380,110){$(L-1)'$}
\put(470,110){$L'$}
\end{picture}
\\[15pt]
\hspace{-30pt} \raisebox{33pt}{$\rightarrow \cdots \rightarrow$} \hspace{15pt} &
\setlength{\unitlength}{0.45pt}
\begin{picture}(500,140)(0,-20)
\put(500,0){\framebox(50,100){$s$}}
\put(-50,0){\framebox(50,100){$\displaystyle \frac1s$}}
\put(0,50){\framebox(50,50){$q$}}
\put(50,50){\framebox(50,50){$q$}}
\put(100,50){\framebox(50,50){$q$}}
\put(150,50){\framebox(50,50){$q$}}
\put(0,0){\framebox(50,50){$p$}}
\put(50,0){\framebox(50,50){$p$}}
\put(100,0){\framebox(50,50){$p$}}
\put(150,0){\framebox(50,50){$p$}}
\put(200,0){\framebox(200,100){$\cdots$}}
\put(400,0){\framebox(50,50){$p$}}
\put(450,0){\framebox(50,50){$p$}}
\put(400,50){\framebox(50,50){$q$}}
\put(450,50){\framebox(50,50){$q$}}
\put(25,0){\circle*{8}}
\put(75,0){\circle*{8}}
\put(125,0){\circle*{8}}
\put(175,0){\circle*{8}}
\put(425,0){\circle*{8}}
\put(475,0){\circle*{8}}
\put(25,100){\circle*{8}}
\put(75,100){\circle*{8}}
\put(125,100){\circle*{8}}
\put(175,100){\circle*{8}}
\put(425,100){\circle*{8}}
\put(475,100){\circle*{8}}
\put(20,-25){$1$}
\put(70,-25){$2$}
\put(120,-25){$3$}
\put(170,-25){$4$}
\put(395,-25){$L-1$}
\put(470,-25){$L$}
\put(20,110){$1'$}
\put(70,110){$2'$}
\put(120,110){$3'$}
\put(170,110){$4'$}
\put(380,110){$(L-1)'$}
\put(470,110){$L'$}
\end{picture}
\end{tabular}
\end{center}
The connectivity pattern of the last arrangement is equal in distribution to that of the second arrangement in the theorem, again by Lemma~\ref{lem:trivial}.
\end{proof}

Theorem~\ref{thm:row-switching-inv} now easily implies Theorem~\ref{thm:matrices-commute} using similar reasoning to that used in Section~\ref{sec:proof-commutation}. Note however that our pattern-preserving involution $V$ also gives an immediate proof of Theorem~\ref{thm:row-switching-inv} by matching the two-row configurations $\binom{\rho_2}{\rho_1} \in \rows^2$ into pairs according to the rule $\binom{\rho_2'}{\rho_1'}= V \binom{\rho_2}{\rho_1}$, in a way that preserves the connectivity pattern and precisely maps a $(p,q)$-biased probability distribution of plaquettes ($p$-biased plaquettes in the bottom row, $q$-biased in the top row) into the $(q,p)$-biased distribution. On the other hand, the algebraic manipulations in the proof \linebreak above---although quite elegant---have no such combinatorial interpretation. In fact, keeping track of the transformation of (signed!)\ measures at any step during the proof one only gets a summation identity relating \emph{sums} of $(p,q)$-biased probabilities over a certain class of two-row configurations (the ones having a given connectivity pattern of the $2L$ endpoints) to the sum of $(q,p)$-biased probabilities over the same class. From an analysis of the proof above it is not at all clear that such a summation identity can be refined into a pointwise equality between probabilities of pairs of configurations (moreover, the role of the auxiliary parameter $s(p,q)$ is also quite mysterious). Our construction of the involution $V$ provides such a refinement, and therefore gives a somewhat more satisfying explanation of the commutation property of the transfer matrices.

\section{Final remarks}

\label{sec:final-remarks}

\begin{enumerate}

\item \textbf{The invariance theorem.} The path from Theorem~\ref{thm:matrices-commute} to Theorem~\ref{thm:invariance} is simple and relies on well-known ideas. Here is a sketch of the proof. First, note that the technical condition $\sum_j p_j^n (1-p_j)^n=\infty$ implies using the Borel-Cantelli lemma that almost surely there will be infinitely many rows of the form $(\rightplaq \ \leftplaq \ \rightplaq\  \leftplaq\  \ldots\  \rightplaq\  \leftplaq)$; it is easy to see that even a single such row forces the connectivity pattern to be well-defined (i.e., paths cannot escape to infinity). Denote by $D(p_1,p_2,\ldots)$ the distribution of the connectivity pattern. The commutation property of the transfer matrices implies that $D(\ldots p_j,p_{j+1} \ldots) =
D(\ldots p_{j+1},p_j \ldots)$. But it is also clear that $D(\cdot)$ depends only weakly on the far-away coordinates $p_j$ with $j$ large, so exploiting this (through a suitable limiting argument that is left to the reader), a row with some fixed bias, say $r_0=1/2$, can be ``brought from infinity'' to show that
\begin{align*}
D(p_1,p_2,\ldots) &= D(p_1,p_2,\ldots,p_{10^{10000}}, \ldots, r_0 \textrm{ [infinitely far away]} )
\\ &=
D(p_1,p_2,\ldots,p_j, r_0, p_{j+1}, \ldots) = \ldots = D(r_0, p_1, p_2, \ldots).
\end{align*}
By induction, we get that
$D(p_1,p_2,\ldots) = D(\,\overbrace{r_0,\ldots,r_0}^{m\textrm{ times}}\,, p_1,p_2,\ldots)
$
for every $m\ge 1$, and another limiting argument then implies that $D(p_1,p_2,\ldots)=D(r_0,r_0,\ldots)$.

\item \textbf{Combinatorial approach to other applications of the Yang-Baxter equation.} The Yang-Baxter equation is an important tool with deep consequences in statistical mechanics \cite{baxter}, algebraic and enumerative combinatorics \cite{bressoud, kuperberg1, kuperberg2}, and knot theory \cite{wu}. In many of these applications, the Yang-Baxter equation is used to reveal subtle symmetries of a problem that are hard to detect using other methods. In this paper we showed however that a probabilistic symmetry that was previously proved using this somewhat mysterious algebraic device has a direct combinatorial explanation. This raises the intriguing possibility that some of the other important applications of the Yang-Baxter equation can be approached using similar ideas.

\end{enumerate}

\section*{Acknowledgements}

The authors thank Omer Angel, Gady Kozma, Gidi Amir, Greg Kuperberg and Roger Behrend for helpful discussions. Dan Romik was supported by the National Science Foundation under grant DMS-0955584, and by grant \#228524 from the Simons Foundation. Part of the work on this paper was done during his visit to the Erwin Schr\"odinger International Institute for Mathematical Physics, and he is grateful for the Institute's support. Ron Peled was supported by the Israeli Science Foundation grant number 1048/11 and by the Marie Curie IRG grant SPTRF.

\bigskip

\noindent
\begin{tabular}{lcl}
Ron Peled & & Dan Romik
\\
School of Mathematical Sciences
&&Department of Mathematics \\
Tel-Aviv University
&&University of California, Davis \\
Tel-Aviv 69978
&&One Shields Ave, Davis, CA 95616 \\
Israel && USA
\\[6pt]
Email: \texttt{peledron@post.tau.ac.il}
&&Email: \texttt{romik@math.ucdavis.edu}
\end{tabular}

\end{document}